\documentclass[11pt]{article}
\usepackage[margin=1in]{geometry}

\usepackage{graphicx}
\usepackage{indentfirst,csquotes}
\usepackage{algorithm}
\usepackage[noend]{algpseudocode}
\usepackage{mathtools}
\usepackage{tikz}
\usepackage[numbers]{natbib}
\usepackage{url}
\usepackage[
  colorlinks=true,
  linkcolor=black,
  citecolor=blue!60!black,
  urlcolor=blue!60!black
]{hyperref}
\usepackage{amsmath, amssymb, amsthm}
\usepackage{thmtools}
\usepackage{thm-restate}
\usepackage{enumitem}
\usepackage{xspace}
\usepackage{cleveref}

\newtheorem{theorem}{Theorem}[]

\newtheorem{claim}[theorem]{Claim}
\newtheorem{example}[theorem]{Example}
\newtheorem{lemma}[theorem]{Lemma}
\newtheorem{proposition}[theorem]{Proposition}

\newcommand{\LEFCAshort}{\textsc{LEFA}\xspace}
\newcommand{\LEFCASMshort}{\textsc{LWMEFA}\xspace}
\usepackage{authblk}

\begin{document}
\title{Envy-Free Allocation of Indivisible Goods under Leontief Preferences}
\author[1]{Tanmay Inamdar$^{\text{1}}$\hspace{1.5cm}Pallavi Jain$^{\text{1}}$\hspace{1.5cm}Pranjal Pandey}
\affil[1]{\small Department of Computer Science and Engineering, Indian Institute of Technology Jodhpur, India.}
\date{}

\maketitle

\begin{abstract}
Envy-freeness is a fundamental notion of fairness in the allocation of indivisible goods. In this paper, we study envy-free allocation under Leontief preferences, which model perfect complements. Although Leontief preferences have been extensively studied in the context of allocating divisible goods and market equilibria, they have received comparatively little attention for the allocation of indivisible goods.
We show that, unlike additive valuations in cardinal preferences, an envy-free allocation always exists for Leontief preferences when there are at least two goods.  
In contrast, envy-free allocations may fail to exist when  there is a single good, however it can be decided in polynomial time. 
We next study the problem of computing a welfare-maximizing envy-free allocation. We prove that this problem is NP-hard in general,  
whereas it is polynomial-time solvable when there is only a single good or agents have identical demands. Finally, we investigate the parameterized complexity of this problem.
\end{abstract} 

\medskip

\textbf{AI Disclosure.} Claude Opus 4.8 was used in response to the authors’ prompt to generate the proof of Theorem \ref{thm:np-hard-welfare}. The proof was subsequently checked, verified, and polished by the authors. All other work presented in this paper is original work of the authors.

\bigskip

\noindent 

\section{Introduction}

Fair allocation of indivisible resources is a fundamental problem in economics, multi-agent systems, and artificial intelligence, with applications including course allocation, cloud computing, task assignment, and recommendation systems. A central challenge is to allocate limited resources while ensuring fairness among self-interested agents. Among the many notions of fairness, envy-freeness (EF) is one of the strongest and most widely studied: an allocation is envy-free if every agent weakly prefers its own bundle to that of every other agent.

The existence and computational complexity of envy-free allocation of indivisible goods depend crucially on the underlying preference model and valuation function. Most of the algorithmic literature has focused on additive valuations and their generalizations, such as submodular and subadditive valuations. Under additive valuations, exact envy-free allocations may not exist, motivating approximate notions such as EF1 and EFX~\cite{DBLP:journals/ai/AmanatidisABFLMVW23,DBLP:journals/sigecom/AzizLMW22,brandt2016handbook,guo2023survey,DBLP:journals/jair/LiuLSW24,maskin1987fair,DBLP:conf/sigecom/LiptonMMS04,DBLP:conf/aaai/ChaudhuryGM21}. In contrast, much less is known about envy-freeness under complementary preferences, despite their importance in many real-world applications.

In this paper, we study envy-free allocation under Leontief preferences, which model perfect complements. Such preferences naturally arise in applications where multiple resources are jointly required to complete a task. For example, training a machine learning model may require a GPU, CPU cores, and sufficient memory; allocating additional GPUs without adequate memory or CPU resources provides little benefit. Similarly, manufacturing tasks often require both a machine and an operator. Unlike additive valuations, Leontief preferences capture this complementary nature of resources. Formally, each agent specifies a demand vector over the goods, and its utility from a bundle is determined by the extent to which the bundle satisfies this demand. Thus, additional copies of one good are valuable only when accompanied by the other required goods. This complementarity distinguishes Leontief preferences from the additive valuations that dominate the fair division literature. Although Leontief preferences have been extensively studied in market equilibrium and divisible resource allocation~\cite{DBLP:conf/infocom/0030LL14,DBLP:conf/aaai/NarayanaK21,DBLP:conf/nsdi/GhodsiZHKSS10,DBLP:journals/tcs/Garg17,DBLP:conf/stoc/GargMVY17,DBLP:journals/mp/ZhuDY12,DBLP:conf/icalp/CodenottiV04,DBLP:conf/wine/GoelHP19,li2013egalitarian}, their role in the fair allocation of indivisible goods has received little attention~\cite{10.1145/2739040}. In fact, it is also mentioned in the literature that the setting of indivisible goods is more practically relevant, as compared to divisible goods~\cite{DBLP:conf/nsdi/GhodsiZHKSS10,10.1145/2739040}.

Throughout this paper we require every copy of every good to be assigned to some agent, i.e., allocations are \emph{complete}.
%Relaxing this to partial allocation makes the envy-freeness question trivial: simply give every agent an empty bundle, which is envy-free but achieves zero utility for all. \todo{we have the smae for complete as well.}
Complete allocation is natural and important for several reasons.
First, leaving resources idle incurs real costs, unused machines degrade, cloud instances still consume power, and perishable items expire, hence distributing every copy is preferable even if the marginal utility of some additional copies to the recipient is zero.
Second, many practical allocation processes (course assignment, server provisioning, task scheduling) demand that every item be owned by some party for accountability and legal reasons. For example, many legal frameworks e.g., water rights~\cite{li2013egalitarian} use the principle of ``use it or lose it'', which is why a user may want to hold a resource, which leads to the next reason. A resource that appears useless today may become valuable if an agent's requirements change or complementary resources arrive later; assigning it now preserves that optionality, whereas permanently withholding it does not.

To the best of our knowledge, this is the first systematic study of envy-free complete allocation for indivisible goods under Leontief preferences. Our results reveal a striking contrast with additive valuations: \emph{while exact envy-free allocations may fail to exist under additive valuations, they are guaranteed under Leontief preferences when there are at least two goods}. We further investigate the computational complexity of finding and optimising envy-free allocations. Our main contributions are summarized below.

\paragraph{Our Contribution.}
We studied existential as well as computational questions.
\begin{itemize}[leftmargin=*]
    \item {\bf Complementarity guarantees envy-freeness.} Contrary to the additive setting, where envy-free allocations may fail to exist, we prove that every instance admits an envy-free allocation as long as the instance has at least two goods. Moreover, such an allocation can be computed in polynomial time (Thm.~\ref{thm:multi-good-ef}). 

    \item {\bf The boundary of existence.} We show that this guarantee is tight: envy-free allocations may fail to exist when agents demand only a single good. For the single-good setting, we completely characterize the existence of envy-free allocations and provide a polynomial-time algorithm (Thm.~\ref{thm:single-good-types}). %However, we prove that envy-free up to one demand unit (EF1) allocations always exist unconditionally and can be computed in polynomial time (Thm.~\ref{thm:ef1-always-exists}).

    \item {\bf Fairness versus efficiency.} Although envy-free allocations always exist in the multi-good setting,
    they may have zero welfare. Motivated by this limitation and the incompatibility of envy-freeness and Pareto optimality for Leontief preferences shown by Parkes et al.~\cite{10.1145/2739040}, we study the problem of computing a welfare-maximizing envy-free allocation, where the objective is to find an envy-free allocation that maximizes the social welfare among all envy-free allocations. We prove that this problem is NP-hard (Thm.~\ref{thm:np-hard-welfare}), even though finding an arbitrary envy-free allocation is polynomial-time solvable. On the positive side, we show that it admits a polynomial-time algorithms for the following two cases: \emph{(1)} when agents demand a single good (Thm.~\ref{thm:single-good-welfare}) and \emph{(2)} when agents have identical demands (Thm.~\ref{thm:identical-multi}).
    
    \item {\bf Parameterized Complexity.} We show that the problem is W[1]-hard with respect to the welfare $w$ (Thm.~\ref{thm:np-hard-welfare}),  admits a parameterized algorithm with respect to the number of agents $n$, where the running time has a pseudo-polynomial dependence on $q_{max}$ (Thm.~\ref{thm:fpt-welfare-n}), and the parameter $m+q_{\max}$ (Thm.~\ref{thm:dp-welfare}), where $m$ is the number of goods and $q_{\max}$ is the maximum number of copies of any good.
\end{itemize}
Figure~\ref{fig:results-summary} summaries the results.

\begin{figure*}[t]
\centering
\begin{tikzpicture}[
  every node/.style={font=\footnotesize},
  hdr/.style={draw=gray!70,  fill=gray!25,  rounded corners=2pt, align=center,
              minimum height=0.55cm, font=\footnotesize\bfseries, inner sep=3pt},
  snode/.style={draw=gray!55,  fill=gray!9,   rounded corners=2pt, align=center, inner sep=3pt},
  pbox/.style={draw=green!55!black, fill=green!9,  rounded corners=2pt, align=center, inner sep=3pt},
  rbox/.style={draw=red!65!black,   fill=red!5,    rounded corners=2pt, align=center, inner sep=3pt},
  fbox/.style={draw=blue!55!black,  fill=blue!8,   rounded corners=2pt, align=center, inner sep=3pt},
]
% ---------- column centres and widths ----------
\def\xS{2.0}  \def\wS{3.2cm}
\def\xM{6.8}  \def\wM{5.4cm}
\def\xR{13.2} \def\wR{6.4cm}
% ---------- per-row minimum heights ----------
\def\hA{0.88cm}   % Single Good row
\def\hB{2.70cm}   % Multi-Good row (three stacked sub-boxes on right)
% sub-box height for the three right-column boxes in row 2
\def\hSub{0.72cm}
% ---------- row y-centres (small gap between rows) ----------
\def\yH{0}
\def\yA{-0.82}
\def\yB{-2.75}
% ===== HEADERS =====
\node[hdr, text width=\wS] at (\xS,\yH) {Setting};
\node[hdr, text width=\wM] at (\xM,\yH) {\LEFCAshort};
\node[hdr, text width=\wR] at (\xR,\yH) {\LEFCASMshort};
% ===== ROW 1: Single Good =====
\node[snode, text width=\wS, minimum height=\hA] at (\xS,\yA) {\textbf{Single Good}, $m=1$};
\node[pbox,  text width=\wM, minimum height=\hA] at (\xM,\yA) {Complete characterization\ \& algo, $\mathcal{O}(n\log q)$ (Thm.~\ref{thm:single-good-types})};
\node[pbox,  text width=\wR, minimum height=\hA] at (\xR,\yA) {$\mathcal{O}(n\log q)$ (Thm.~\ref{thm:single-good-welfare})};
% ===== ROW 2: Multi-Good =====
\node[snode, text width=\wS, minimum height=\hB] at (\xS,\yB) {\textbf{Multi-Good}, $m\ge2$};
\node[pbox,  text width=\wM, minimum height=\hB] at (\xM,\yB) {\textbf{Always exists},\\ $\mathcal{O}(mn)$ (Thm.~\ref{thm:multi-good-ef})};
% right column: three separate stacked boxes
\node[rbox, text width=\wR, minimum height=\hSub] at (\xR,{\yB+0.85}) {%
  \textit{general:}~ \textbf{Strongly NP-hard}; W[1]-hard w.r.t.~$w$; no $n^{1-\epsilon}$-approx (Thm.~\ref{thm:np-hard-welfare})};
\node[pbox, text width=\wR, minimum height=\hSub] at (\xR,{\yB}) {%
  \textit{identical:} $\mathcal{O}(nm)$ (Thm.~\ref{thm:identical-multi})};
\node[fbox, text width=\wR, minimum height=\hSub] at (\xR,{\yB-0.85}) {%
  \textbf{Pseudo-poly FPT} by $n$ (Thm.~\ref{thm:fpt-welfare-n});~\textbf{FPT} by $q_{\max}+m$ (Thm.~\ref{thm:dp-welfare})};
\end{tikzpicture}
% \caption{Summary of results for \LEFCAshort and \LEFCASMshort:
% {\tikz[baseline=-0.6ex]\node[draw=green!55!black,fill=green!9,rounded corners=1pt,inner sep=2pt,font=\footnotesize]{Poly-time};},~~%
% {\tikz[baseline=-0.6ex]\node[draw=blue!55!black,fill=blue!8,rounded corners=1pt,inner sep=2pt,font=\footnotesize]{FPT};},~~%
% {\tikz[baseline=-0.6ex]\node[draw=red!65!black,fill=red!5,rounded corners=1pt,inner sep=2pt,font=\footnotesize]{Hardness};}.}
\caption{A summary of results in this paper.}
\label{fig:results-summary}

\end{figure*}
%\todo[inline]{We might need to fix running time in the Table. In fact, text also require some cleaning}

\paragraph{Related Work.}
 Ghodsi et al.~\cite{DBLP:conf/nsdi/GhodsiZHKSS10} studied divisible resources under Leontief preferences and introduced Dominant Resource Fairness (DRF), which simultaneously achieves EF, strategy-proofness, and Pareto optimality. Narayan and Kumar~\cite{DBLP:conf/aaai/NarayanaK21} extended DRF to more general limited-demands models. Leontief preferences also appear in market equilibrium~\cite{DBLP:journals/tcs/Garg17,DBLP:conf/stoc/GargMVY17,DBLP:conf/icalp/CodenottiV04,DBLP:conf/wine/GoelHP19,DBLP:journals/mp/ZhuDY12}, network resource allocation~\cite{DBLP:conf/infocom/0030LL14}, and egalitarian welfare problems~\cite{li2013egalitarian}. With indivisible goods, Parkes, Procaccia, and Shah~\cite{10.1145/2739040} studied algorithm for EF1 + PO + SI. Br\^{a}nzei, Lv, and Mehta~\cite{DBLP:conf/ijcai/BranzeiLM16} studied fair division for single-minded agents, each of whom desires one specific subset of goods and obtains utility $1$ if this demand is fully satisfied and $0$ otherwise; this is a special case of Leontief preferences in which every demand vector is a $0/1$.
Beyond the classical parameterized-complexity toolkit we rely on~\cite{Downey1999,CyganFKLMPPS2015}, parameterized techniques have recently been applied directly to fair division problems.  Bredereck, Kaczmarczyk, Knop, and Niedermeier~\cite{DBLP:conf/ecai/BredereckKKN23} study \emph{high-multiplicity} fair allocation, where goods (and agents) come with multiplicities, and show that computing an envy-free Pareto-efficient allocation is fixed-parameter tractable in the combined parameter number of agents plus number of distinct item types, even when utilities and multiplicities are encoded in binary; the multiplicities in their model play a role analogous to the copies $q_j$ of each good in our setting, though their agents have general additive utilities rather than Leontief preferences.

\section{Preliminaries}

\paragraph{Problem Instance.} An allocation instance under Leontief preferences is given by $(N,G,Q,D)$, where  $N = \{1,2,\ldots,n\}$ denotes the set of agents,  
$G = \{g_1,\ldots,g_m\}$ denotes the set of goods, $Q=(q_1,\ldots,q_m)$ is an $m$-dimensional vector, where $q_i > 0$ is the number of copies of good $i$, and $D=(d_{ij})_{n\times m}$ is a two-dimensional array of \emph{positive integers}, where $d_{ij}$ denotes the demand of agent $i$ for good $g_j$. Note that we consider the positive-demand case, which is considered in \cite{DBLP:conf/icalp/CodenottiV04, DBLP:journals/tcs/Garg17}. The positive-demand setting naturally models complementary resource allocation settings, where each good represents a resource type and every task requires some non-zero amount of every resource, e.g.,  cloud computing. On the other hand, we note that some  works on Leontief utilities mentioned earlier work allow \emph{non-negative} demands. 
%Here, $q_i$ and $d_{ij}$ are positive integers.  

An \emph{allocation} is an ordered tuple of vectors
$\mathbf{X} = ( \mathbf{X}_1, \dots, \mathbf{X}_n ) \in (\mathbb{Z}_{\ge 0}^m)^n$,
where $\mathbf{X}_i = (x_{i1}, \dots, x_{im})$ denotes the \emph{bundle} assigned to  the agent $i$. Here, $x_{ij}$ denotes the number of copies of good $g_j$ assigned to the agent $i$. The allocation must be complete, i.e., $\sum_{i \in N} x_{ij} \;=\; q_j, \forall j \in G.$

In certain situations, we work with \emph{partial} allocations, i.e., when $\sum_{i\in N}x_{ij} \le q_j$ for zero or more goods $g_j\in G$. However, unless explicitly mentioned otherwise, we always mean complete allocations.
In the special case where the instance has only a single good $g$ (i.e., $G = \{g\}$) with $q$ copies, the allocation simplifies to an ordered tuple
$\mathbf{X} = (x_1, x_2, \dots, x_n) \in \mathbb{Z}_{\ge 0}^n$, 
where $x_i$ denotes the number of copies of good $g$ allocated to agent $i$ and demand vector simplifies to $D=(d_1,,\ldots,d_n)$ where $d_i$ denotes the number of copies of good $g$ demanded by agent $i$. Here, we have the following for complete allocation $\sum_{i \in N} x_{i} \;=\; q.$

\paragraph{Leontief Utility.} The  utility of an agent $i$ for the bundle $\mathbf{\mathbf{X}}_k$ (where $k$ can be equal to $i$) is defined as 
\[
u_i(\mathbf{X}_k) \;=\; \min_{g_j \in G} \left\lfloor \frac{x_{kj}}{d_{ij}} \right\rfloor
\]
% with the convention that if $d_{ij} = 0$, then the corresponding term is ignored in the minimum.
Intuitively, $u_i(\mathbf{X}_k)$ represents the maximum number of complete bundles
that agent $i$ can form from $\mathbf{X}_k$. If a good $g\in \arg \min_{g_j \in G} \lfloor x_{kj}/d_{ij} \rfloor$, then we say that $g$ is \emph{responsible} for the utility of agent $i$ in the bundle $\mathbf{X}_k$. Let $g$ be a good and $\mathbf{X}_k$ be a bundle then $u_i(\mathbf{X}_k+g)$ represents the utility of the bundle obtained by adding a copy of $g$ to the bundle defined by $\mathbf{X}_k$, for agent $i$.

\paragraph{Fairness and Welfare.} An allocation $(\mathbf{X}_1,\dots,\mathbf{X}_n)$ is said to be \emph{envy-free} (EF)
if for every pair of agents $i,j \in N$, $u_i(\mathbf{X}_i) \;\ge\; u_i(\mathbf{X}_j)$.

An allocation is \emph{Pareto optimal} (PO) if there is no other feasible allocation $(\mathbf{X}'_1,\dots,\mathbf{X}'_n)$ such that $u_i(\mathbf{X}'_i) \ge u_i(\mathbf{X}_i)$ for all $i \in N$ and $u_i(\mathbf{X}'_i) > u_i(\mathbf{X}_i)$ for at least one agent $i \in N$. %An allocation is \emph{Pareto suboptimal} if it is not Pareto optimal, i.e., some other allocation Pareto-dominates it. An allocation is $\alpha$
The  welfare of an allocation $\mathbf{X}$ is the total utility of all the agents, i.e., $\sum_{i\in N}u_i(\mathbf{X}_i)$. Clearly, if an allocation maximises the welfare, then is also PO. Moreover, an allocation with maximum welfare amongst the set of all envy-free allocations, is also a PO allocation amongst all envy-free allocations.

Given an input $(N,G,Q,D)$, we call the problem of finding an envy-free allocation (if one exists) \emph{\LEFCAshort}, and the problem of finding a welfare-maximising envy-free allocation amongst all envy-free allocations (if it is non-empty) \emph{\LEFCASMshort}. 

If $A_1, A_2, \dots, A_z$ are finitely many sets, then we define $\bigtimes_i^z A_i = A_1 \times A_2 \times \dots \times A_z$.

\paragraph{Parameterized Complexity.}  In parameterized complexity, each problem instance is associated with an integer called the {\em parameter}. A problem is said to be {\em fixed-parameter tractable} (FPT) with respect to parameter $k$ if it can be solved in $f(k)\cdot  \mathrm{poly}(n)$ time for some computable function $f$, where $n$ is the input size. 

To provide evidence that a parameterized problem is unlikely to be FPT, one uses the framework of \emph{parameterized reductions}. Given two parameterized problems $L_1$ and $L_2$, a parameterized reduction from $L_1$ to $L_2$ is a mapping that, given an instance $(x,k)$ of $L_1$, computes in $f(k)\cdot \mathrm{poly}(|x|)$ time (for some computable function $f$) an instance $(x',k')$ of $L_2$ such that $(x,k)\in L_1$ if and only if $(x',k')\in L_2$, and $k'\le g(k)$ for some computable function $g$. The classes $\mathrm{W}[1]\subseteq \mathrm{W}[2]\subseteq \cdots$ form the \emph{W-hierarchy}, and a problem is {\em W[$t$]-hard} with respect to parameter $k$ if every problem in $\mathrm{W}[t]$ admits a parameterized reduction to it (parameterized by $k$); the canonical W[1]-hard and W[2]-hard problems are, respectively, \textsc{Independent Set} and \textsc{Dominating Set}, parameterized by solution size. It is widely believed that $\mathrm{FPT}\ne \mathrm{W}[1]$, so showing that a problem is W[1]-hard (or W[$t$]-hard for any $t\ge1$) with respect to parameter $k$ is considered strong evidence that the problem does \emph{not} admit an FPT algorithm with respect to $k$. We refer the reader to \cite{CyganFKLMPPS2015} for more details. 

\section{\LEFCAshort: Existence and  Computation}

We begin with a striking result: in contrast to additive utilities, under Leontief preferences, an envy-free allocation always exists and can be computed in polynomial time when there are at least two goods. The following theorem is due the fact that every agent needs at least one copy of every good to receive non-zero utility. Thus, we assign all the copies of one good to agent $1$, and rest of the copies of all goods to agent $2$.

\begin{theorem}
\label{thm:multi-good-ef}
For instances with at least two goods, an EF allocation under Leontief preferences always exists and can be found in $\mathcal{O}(mn)$ time.
\end{theorem}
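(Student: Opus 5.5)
We will prove the statement by an explicit construction that makes \emph{every} bundle worthless to \emph{every} agent, so that envy-freeness holds trivially. The key observation is that demands are positive integers: $d_{ij}\ge 1$ for all $i\in N$ and $g_j\in G$. Hence if a bundle $\mathbf{X}_k$ has $x_{kj}=0$ for some good $g_j$, then $\lfloor x_{kj}/d_{ij}\rfloor = 0$ for every agent $i$, and therefore $u_i(\mathbf{X}_k)=0$. So it suffices to build a complete allocation in which every bundle misses at least one good entirely.

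First we dispose of the degenerate case $n=1$. There we give all copies of all goods to agent $1$; this allocation is complete, and envy-freeness holds vacuously because there is no other agent.

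For $n\ge 2$ and $m\ge 2$ we use the construction sketched before the theorem. Set $x_{11}=q_1$ and $x_{1j}=0$ for $j\ge 2$. Set $x_{21}=0$ and $x_{2j}=q_j$ for $j\ge 2$. Give every agent $k\ge 3$ the empty bundle. Each good is fully assigned, so the allocation is complete. We then check that every bundle misses some good:
\begin{itemize}
\item $\mathbf{X}_1$ has no copy of $g_2$, which exists since $m\ge 2$;
\item $\mathbf{X}_2$ has no copy of $g_1$;
\item each $\mathbf{X}_k$ with $k\ge 3$ is empty.
\end{itemize}
By the key observation, $u_i(\mathbf{X}_k)=0$ for all $i,k\in N$. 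In particular $u_i(\mathbf{X}_i)=0\ge 0=u_i(\mathbf{X}_k)$, so the allocation is EF.

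For the running time, writing down the $n\times m$ allocation matrix takes $\mathcal{O}(mn)$ time, which matches the claimed bound.

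There is no real obstacle here. The only points needing care are the use of $m\ge 2$, so that agent $1$'s bundle genuinely misses some good, and the use of strictly positive demands. We would also remark that the construction fails for $m=1$, since whoever receives copies of the single good may be envied. This explains why the single-good case needs separate treatment later in the paper.
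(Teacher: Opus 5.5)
Your proof is correct and is essentially the paper's argument. It has the same trivial $n=1$ case and the same allocation (all copies of $g_1$ to agent $1$, all other goods to agent $2$, empty bundles elsewhere), and it uses the same observation that strictly positive demands make every bundle worth $0$ to every agent.
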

\begin{proof}
If there is only one agent, then the only complete allocation is trivially envy-free. Otherwise, arrange agents in arbitrary order $1, \ldots, n$. Assign all $q_1$ copies of good $g_1$ to agent~$1$, and for every good $g_j$ with $j\ge2$, assign all $q_j$ copies of $g_j$ to agent~$2$; every other agent (if any) $i\ge3$ receives the empty bundle. This allocation is complete, so $\sum_{i\in N}x_{ij}=q_j$ for every $j\in G$. 
We show that utility of every agent from each bundle is $0$. %Consider an agent $i$, either it receive all the  $u_k(\mathbf{X}_i)=0$ for every pair of agents $k,i\in N$. 
Since $m\ge2$, good $g_2$ exists,  and we did not assign any copy of $g_2$ to agent~$1$. Thus, utility of every agent for the bundle assigned to agent $1$ is $0$. Similarly, we did not assign any copy of good $g_1$ to agent $2$, thus, the utility of every agent for the bundle assigned to agent $2$ is $0$. The rest of the agents receive empty bundle for which the utility of every agent is $0$. Thus, the allocation is envy-free. 
\end{proof}

We next consider the case where the instance has only a single good, in which case an envy-free allocation may not exist. The following example illustrates this with two agents.

\begin{example}
\label{example:ef-one-good}
Consider a single good with one copy and two agents, both with unit demand for the good.
In any complete allocation exactly one agent will receive the copy of good and other will envy.
\end{example}

However existence of an EF allocation can be decided in this case, and if exists, such an allocation can be computed in polynomial time. Below we give algorithm for computing an EF allocation for single good.

\medskip
Let $(x_1,x_2,\ldots,x_n)$ be an envy-free allocation for single good $g$. 
% where $x_i$ denotes the number of copies of good $g$ allocated to agent $i$. 
The utility of agent $i$ is $u_i(x_i) = \lfloor x_i/d_i \rfloor$. We observe that
$x^\star = \max_i x_i$ completely determines the utility target
$t_i = \lfloor x^\star/d_i \rfloor$ for every agent~$i$ as 
% with $d_i > 0$: 
no agent can receive fewer than $t_i d_i$ copies (otherwise they would envy the agent holding the most).
% Agents with $d_i = 0$ have $t_i = 0$ and impose no envy constraint; they may receive any number of copies without affecting envy-freeness.
For a demand vector  $D=(d_1,\dots,d_n) \in \mathbb{Z}_{> 0}^n$, let us define a function $f_D:\mathbb{N} \rightarrow \mathbb{N}$ such that
\[
  f_D(x) = \sum_{i\in N} \lfloor x/d_i \rfloor \cdot d_i,
\]
% where by convention $\lfloor x/d_i \rfloor \cdot d_i = 0$ when $d_i = 0$.
Since $f_D$ is monotone non-decreasing, the largest $x$ satisfying
$f_D(x) \le q$ can be found by binary search.
\begin{algorithm}[H]
\caption{EF Allocation for Single Good}
\label{algo:heterogeneous}
\begin{algorithmic}[1]
\Require Agents $N$, demand vector  $D=(d_1,\dots,d_n) \in \mathbb{Z}^n_{>0}$,
         copies $q \in \mathbb{Z}_{> 0}$ of the single good
\Ensure An envy-free allocation $(x_1,\dots,x_n)$,
        or \textsc{No EF allocation exists}
\State Define $f_D(x) = \sum_{i \in N} \lfloor x/d_i \rfloor \cdot d_i$
\State $x^\star \gets$ largest integer in $[0, q]$ with $f_D(x^\star) \le q$
       \Comment{using binary search }
\If{$n \cdot x^\star < q$}
    \State \Return \textsc{No EF allocation exists}
\EndIf
\For{each agent $i \in N$}
    \State $t_i \gets \lfloor x^\star / d_i \rfloor$;\quad $y_i \gets t_i \cdot d_i$
\EndFor
\State $R \gets q - \sum_{i \in N} y_i$
\For{each agent $i \in N$ while $R > 0$}
    \State $\delta \gets \min(R,\; x^* - y_i)$;\quad
           $y_i \gets y_i + \delta$;\quad $R \gets R - \delta$
\EndFor
\State \Return $(y_1, \dots, y_n)$
\end{algorithmic}
\end{algorithm}

The correctness of Algorithm~\ref{algo:heterogeneous} is due to the fact that we first assign $\lfloor x^\star/d_i \rfloor$ copy to every agent, so nobody envy each other as everybody receive at most $x^\star$ copy of $g$. The remaining copies are assigned in a way that it does not increase the utility. The main technical part of the proof is that an envy-free allocation exist iff $nx^\star \geq q$, which leads to the complete allocation, if it exists.

\begin{theorem}
\label{thm:single-good-types}
Algorithm~\ref{algo:heterogeneous} solves \LEFCAshort with a single good in $\mathcal{O}(n \log q)$ time.
\end{theorem}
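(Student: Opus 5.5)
Our plan is to prove two things: if $n x^\star \ge q$ the algorithm returns an envy-free complete allocation, and if $n x^\star < q$ no envy-free allocation exists. The running time then follows easily. $f_D$ is evaluated in $\mathcal{O}(n)$ time. It is monotone non-decreasing in $x$, since each term $\lfloor x/d_i\rfloor d_i$ is. Binary search over $[0,q]$ therefore costs $\mathcal{O}(n\log q)$, and the two loops cost $\mathcal{O}(n)$. Note also that $f_D(0)=0\le q$, so $x^\star$ is well defined.

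\emph{Soundness.} We first check feasibility. After the first loop, $y_i=\lfloor x^\star/d_i\rfloor d_i\le x^\star$ and $\sum_i y_i=f_D(x^\star)\le q$, so $R\ge 0$. The second loop raises each $y_i$ up to at most $x^\star$. Its total capacity is $\sum_i (x^\star-y_i)=nx^\star-f_D(x^\star)\ge q-f_D(x^\star)=R$, using the hypothesis $nx^\star\ge q$. Hence $R$ reaches $0$ and the allocation is complete.

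For envy-freeness, every final $y_i$ satisfies $\lfloor x^\star/d_i\rfloor d_i\le y_i\le x^\star$. So for any $j$,
\[
u_i(y_i)\ge\lfloor x^\star/d_i\rfloor\ge\lfloor y_j/d_i\rfloor=u_i(y_j),
\]
and no agent envies another.

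\emph{Completeness (main step).} Suppose $(x_1,\dots,x_n)$ is an envy-free complete allocation, and let $M=\max_i x_i$. Envy-freeness towards the agent holding $M$ gives $\lfloor x_i/d_i\rfloor\ge\lfloor M/d_i\rfloor$. Hence $x_i\ge \lfloor M/d_i\rfloor d_i$ for every $i$, and summing, $f_D(M)\le\sum_i x_i=q$. Since also $M\le q$, the point $M$ is feasible for the search, so $M\le x^\star$ by maximality. Therefore $q=\sum_i x_i\le nM\le nx^\star$. Contrapositively, $nx^\star<q$ implies that no envy-free allocation exists, so the \textsc{No} answer is correct.

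The step I expect to need the most care is exactly this one: recognising that the maximum bundle $M$ of any envy-free allocation is itself a feasible candidate for $x^\star$. Once that is in place, the averaging bound $q\le nM$ closes the argument. The remaining concern is that the greedy filling never pushes a bundle past $x^\star$, which the capacity computation in the soundness part already covers.
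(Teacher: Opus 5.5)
Your proof is correct and follows the same route as the paper. Soundness: when $nx^\star \ge q$, the greedy top-up keeps every bundle in $[\lfloor x^\star/d_i\rfloor d_i,\, x^\star]$, which gives completeness and envy-freeness. Necessity: the maximum bundle $M$ of any envy-free allocation is a feasible binary-search candidate, so $q \le nM \le nx^\star$. Your version is also slightly cleaner, because you state explicitly the steps $M \le q$ and $M \le x^\star$, which the paper's write-up leaves implicit.
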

\begin{proof}
    
Consider an \LEFCAshort instance $(N, \{g\}, q, D)$ with a single good $g$, copies $q \in \mathbb{Z}_{>0}$, and demands $d_1,\dots,d_n \in \mathbb{Z}_{\geq 0}$. Let  $(x_1,\dots,x_n)$ be an envy-free allocation of this instance. Let $p \in N$ be an agent receiving the maximum number of copies, i.e.\ $x_p = \max_i x_i$ and $nx_p<q$

For every agent $i$, envy-freeness requires $u_i(x_i) \ge u_i(x_p)$, i.e.\
$\lfloor x_i/d_i \rfloor \ge \lfloor x_p/d_i \rfloor$.
Since $x_i \le x_p$, by definition of the maximum, we also have
$\lfloor x_i/d_i \rfloor \le \lfloor x_p/d_i \rfloor$.
Hence $\lfloor x_i/d_i \rfloor = \lfloor x_p/d_i \rfloor = t_i$ for every $i$,
and in particular $x_i \in [t_i d_i,\, x_p]$.
Thus,
\[
  f_D(x_p) = \sum_{i \in N} t_i d_i \;\le\; \sum_{i \in N} x_i \;=\; q,
\]
so $x_p$ is a feasible candidate in the binary search 
% and therefore  $x_p \le x^*$
% (since $x_p$ is the \emph{largest} integer satisfying $f_D(x_p) \le q$).
Since $x_p = \max_i x_i$, every agent satisfies $x_i \le x_p$, and summing over
all $n$ agents gives $\sum_i x_i \le n x_p$.
Consequently $q = \sum_i x_i \le nx_p$, giving
$nx_p \ge q$ which contradicts our assumption $nx_p<q$

For a binary search result $x^*$ given $nx^* \ge q$.
Set $t_i = \lfloor x^*/d_i \rfloor$ and $y_i = t_i d_i$ for each agent $i$.
By definition of $f$, $\sum_i y_i = f(x^*) \le q$, we construct allocation by first allocating $y_i$ copies to agent $i$. This allocation leaves $R = q - f(x^*) \ge 0$ copies unassigned.
The assumption $nx^* \ge q$ gives
\[
  R \;=\; q - f(x^*) \;\le\; nx^* - f(x^*)
  \;=\; \sum_{i \in N}\bigl(x^* - t_i d_i\bigr),
\]
which is exactly the total slack available if each agent $i$ can receive up to
$x^* - t_i d_i$ additional copies.
We can therefore greedily assign the $R$ remaining copies, giving each agent $i$
some extra $\delta_i \in [0, x^* - t_i d_i]$ so that $\sum_i \delta_i = R$.
The resulting allocation $y_i = t_i d_i + \delta_i$ satisfies $y_i \in [t_i d_i, x^*]$
for every $i$, which implies $\lfloor y_i/d_i \rfloor = t_i$ (since
$x^* < (t_i+1)d_i$ by definition of the floor).

To verify envy-freeness let us look at agent pairs $i$ and $l$, fix any $i \ne \ell$.
Since $y_\ell \le \mathbf{x}^*$,
\[
  u_i(y_\ell) = \left\lfloor \frac{y_\ell}{d_i} \right\rfloor
  \;\le\; \left\lfloor \frac{x^*}{d_i} \right\rfloor
  \;=\; t_i \;=\; u_i(y_i).
\]
Hence no agent envies any other, and the allocation is envy-free.
Now to argue the running time, the binary search finds the largest $x$ satisfying $f_D(x) \le q$ in $\mathcal{O}(\log q)$ iterations. In each iteration, we need to evaluate $f_D(\cdot)$, which can be done in $\mathcal{O}(n)$ time. For the desired value of $x$, the greedy allocation step takes $\mathcal{O}(n)$ time, giving a total running time of $\mathcal{O}(n \log q)$.
\end{proof}
\medskip
A natural follow-up is to seek an EF allocation that also \emph{maximises} the total welfare, which we consider next.

\section{Welfare Maximising Envy-Free Allocation}
A well-known challenge in fair allocation is that EF and PO cannot, in general, be achieved simultaneously for indivisible goods under Leontief utilities~\cite{10.1145/2739040}.

One might hope to recover compatibility by relaxing PO to
\emph{$\alpha$-Pareto optimality} ($\alpha$-PO), i.e., if we multiply utility of every agent by $(1+\alpha)$, where $\alpha\geq 0$, then it is PO. 
However, the following example shows this is futile for any $\alpha \geq 0$.

\begin{example}
Consider two agents and two goods, each with one copy ($q_1 = q_2 = 1$),
where both agents have identical demand $d_{i1} = d_{i2} = 1$.
Giving both goods to agent~$1$ yields utilities $(1, 0)$ and is PO, but agent~$2$ envies agent~$1$. Similarly, if we assign both goods to agent~$2$, then agent $1$ envies agent $2$.  
The only EF allocation in this instance is to split the goods so that each agent lacks at least one
demanded good (say give one good to agent $1$ and other to agent $2$). This allocation has utility $0$ for each agent, which is not $\alpha$-PO for any $\alpha \geq 0$.
\end{example}

Motivated by this fundamental incompatibility between EF and PO, we study the best achievable
goal within the EF constraint: computing an envy-free allocation that
\emph{maximises} the sum of utilities, i.e., \LEFCASMshort. Note that an EF allocation that maximises the sum of utilities is also PO within the space of EF allocations.

% \medskip
We show that the decision version of \LEFCASMshort (where we are given a target welfare $w$, and the task is to decide whether there exists a complete envy-free allocation with welfare at least $w$) is NP-complete in general, via a polynomial-time, solution-size preserving reduction from \textsc{Independent Set}. In this problem, given a simple undirected graph $H=(V,E)$ and an integer $k\ge0$, the question is whether $H$ contains an independent set of size at least $k$. This problem is known to be NP-complete~\cite{GareyJohnson1979} and is $W[1]$-complete when parameterised by the solution size~\cite{CyganFKLMPPS2015}.
Assuming the Exponential-Time Hypothesis \cite{impagliazzo2001complexity}, it is known that there is no algorithm for the problem that runs in time $f(k) \cdot n^{o(k)}$ for any computable function $f$~\cite{CyganFKLMPPS2015}, and the size of independent set cannot be approximated to a factor $n^{1-\epsilon}$ for any fixed $\epsilon > 0$~\cite{Hastad1999} unless P = NP. We prove the following theorem.

\begin{theorem}
\label{thm:np-hard-welfare}
Decision version of \LEFCASMshort, in the special case when all copies $q_j$ are equal, is:
\begin{itemize}
    \item Strongly NP-complete, 
    \item W[1]-hard when parameterised by the desired welfare $w$, 
    \item Under the Exponential Time Hypothesis, does not admit an algorithm running in time $f(w)\cdot(n+m)^{o(w)}$ for any computable function $f$.
\end{itemize}
Moreover, there is no polynomial-time algorithm for \LEFCASMshort   that approximates the welfare within a factor of $n^{1-\epsilon}$ for any $\epsilon > 0$ unless P = NP.
\end{theorem}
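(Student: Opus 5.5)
The plan is a polynomial-time reduction from \textsc{Independent Set} that preserves solution size exactly. From a graph $H=(V,E)$ with $|V|=n_H$ we build an instance of \LEFCASMshort in which an envy-free allocation of welfare $k$ exists if and only if $H$ has an independent set of size $k$. The reduction uses small numbers, keeps the welfare equal to the independent-set size, and uses only polynomially many agents and goods. Hence strong NP-hardness, W[1]-hardness in $w$, the ETH lower bound and the $n^{1-\epsilon}$ inapproximability would all transfer from the known results for \textsc{Independent Set} quoted before the statement. Membership in NP is immediate, since an allocation is a polynomial-size certificate.

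\textbf{Construction.} Set $D=n_H+1$ and let every good have $q=2n_H+1$ copies, so all $q_j$ are equal as the statement requires.
\begin{itemize}
\item For every edge $e\in E$ there is an \emph{edge good} $g_e$, and for every vertex $v\in V$ a \emph{private good} $h_v$.
\item Every vertex $v$ gives a \emph{vertex agent} $a_v$. It demands $D$ copies of $g_e$ when $v\in e$ and of $h_v$, and $1$ copy of every other good.
\item There are two \emph{dummy agents} $z_1,z_2$ that demand $q+1$ copies of every good, so they always have utility $0$ and never envy anyone.
\end{itemize}
Since $2D>q$, every vertex agent has utility at most $1$ in any bundle. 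All demands are positive and all numbers are $O(n_H)$.

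\textbf{Independent set to allocation.} Given an independent set $S$ with $|S|=k$, give each $a_v$ with $v\in S$ exactly its demand vector. This is feasible for every good:
\begin{itemize}
\item For $g_e$, at most one endpoint of $e$ lies in $S$, so the load is at most $D+(|S|-1)\le q$.
\item For $h_u$, the load is at most $D+(|S|-1)\le q$.
\end{itemize}
All leftover copies go to the dummies: all leftovers of one fixed private good go to $z_1$, and all other leftovers go to $z_2$. Every remaining vertex agent gets the empty bundle. We then check envy-freeness.
\begin{itemize}
\item A bundle of $z_1$ or $z_2$ contains no copy of some good, so it is worthless to everyone because all demands are positive.
\item For $v\in S$ and any $u\ne v$, agent $a_u$ needs $D>1$ copies of $h_u$, but $a_v$'s bundle contains only one copy of $h_u$. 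So $a_u$ values that bundle at $0$.
\end{itemize}
The private goods are there precisely to block envy toward satisfied agents. Without them, a degree-one vertex adjacent to $v$ could envy $a_v$. The welfare is $k$.

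\textbf{Allocation to independent set.} Conversely, take any allocation, envy-free or not, with welfare at least $k$. Only vertex agents contribute, each contributes at most $1$, so at least $k$ vertex agents have utility $1$. If two of them, $a_u$ and $a_v$, were adjacent via $e$, they would jointly need $2D>q$ copies of $g_e$, which is impossible. So they form an independent set of size at least $k$.

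This direction does not use envy-freeness. Therefore the optimal welfare of the constructed instance equals $\alpha(H)$ exactly. The instance has $n=n_H+2$ agents, so an $n^{1-\epsilon}$-approximation for \LEFCASMshort would yield an $n_H^{1-\epsilon'}$-approximation for independent set, contradicting H\aa stad's result unless P $=$ NP. Because $w=k$ and $n+m=O(n_H^2)$, the W[1]-hardness and the $f(w)(n+m)^{o(w)}$ ETH bound follow directly.

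The step I expect to need the most care is the envy argument for the leftover copies. One must make sure that both dummy bundles miss some good entirely, and that feasibility still holds when $S$ is small or $E=\emptyset$. The split above, with one private good's leftovers to $z_1$ and everything else to $z_2$, handles this because $m\ge 2$ always holds once $n_H\ge 2$.
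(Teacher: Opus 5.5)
Your proof is correct, and its core is the paper's own reduction. The gadget is the same: one agent per vertex, edge goods and private vertex goods, $2n_H+1$ copies of each, and demand $n_H+1$ on the support and $1$ elsewhere. Your reverse direction is also the paper's counting argument $\sum_i t_i d_{ig}\le q_g$ applied to private and edge goods, and, as you note, neither version uses envy-freeness there.

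The real difference is how you make the forward allocation complete.
\begin{itemize}
\item The paper adds no agents. It spreads every good over all vertex agents: the owner gets $n+1$ copies, every other agent gets $1$ or $2$, and an unowned good gives $2$ or $3$ to each agent. Envy-freeness then rests on every non-owner holding at most $3<n+1$ copies of each private good, which is why the paper needs $|V|\ge 3$.
\item You give each agent in $S$ exactly its demand vector. You then put all leftovers on two dummy agents whose demand $q+1$ can never be met, split so that each dummy bundle misses some good entirely. This is the same trick as in Theorem~\ref{thm:multi-good-ef}.
\end{itemize}

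Your version is more modular and easier to check: only the single copy of $h_u$ in a satisfied bundle needs inspecting. The price is two extra agents. The paper keeps the agent set equal to the vertex set, so the $n^{1-\epsilon}$ inapproximability transfers with $n=|V|$ exactly. You get it with $n=n_H+2$, which costs only an arbitrarily small loss in $\epsilon$.

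One small point: as the paper does for $|V|\le 2$, you should handle $n_H=1$ directly rather than merely noting that $m\ge 2$ needs $n_H\ge 2$. There $m=1$, and with $S=\emptyset$ the dummy $z_1$ would hold all $3$ copies of $h_v$. Agent $a_v$ values that bundle at $\lfloor 3/2\rfloor=1$ but has utility $0$ itself, so it envies $z_1$.
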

\begin{proof}

Let $(H=(V,E),k)$ be the given instance of \textsc{Independent Set}. If $|V| \le 2$, we solve the instance directly and output a 
% fixed equivalent 
yes/no instance of Independent Set. Hence, we assume that $n=|V|\ge3$. We construct an instance $(N,G,Q,D, w)$ of \LEFCASMshort as follows. For every vertex $v \in V$ we create one agent $a_v$. For every $e\in E$ we create one good $g_e$, and for every $v\in V$ we create one good $g_v$, giving $m=|E|+|V|$. For every good $g$ we set number of copies as $q_g = 2n+1$. For a good $g$, define its \emph{support} ($S$)  as follows: for $e = \{u, v\} \in E$, let $S(g_e)=\{a_u,a_v\}$, and for $v \in V$, let $S(g_v)=\{a_v\}$. Also, set  
\[
d_{a_v,g} \;=\; \begin{cases} n+1, & a_v\in S(g),\\ 1, & a_v\notin S(g).\end{cases}
\]
where $d_{a_v,g}$ denotes the demand of agent $a_v$ for the good $g$.
All demands are strictly positive, all supplies are equal, and all numbers are bounded by $2n+2$; the instance is constructed in time polynomial in $|H|$. The desired welfare is $w=k$.
\medskip
In forward direction, suppose $I\subseteq V$ is a solution to $(H=(V,E),k)$ with $|I|=k$. For a good $g$, the set $S(g)\cap I$ has at most one element. Define the \emph{owner} $\mathrm{own}(g) = o$ to be the unique element of $S(g)\cap I$ if this set is non-empty, and $\bot$ otherwise. Fix an arbitrary ordering $v_1,\ldots,v_n$ of $V$ and allocate each good $g$ independently. If $\mathrm{own}(g) = o \ne\bot$, give $n+1$ copies to $o$, one copy to each of the $n-1$ agents $w\ne o$, and the one remaining copy to the first agent (in the fixed order) different from $o$ (total $(n+1)+(n-1)+1=2n+1$). If, on the other hand, $\mathrm{own}(g)=\bot$, then we give $2$ copies to each agent, and the one remaining copy to $v_1$ (total $2n+1$). The resulting allocation $\mathbf{X}$ is complete by construction. 

Let $a_v\in I$. For every good $g$: if $a_v\in S(g)$ then $\mathrm{own}(g)=a_v$, so, $\lfloor x_{a_v,g}/d_{a_v,g}\rfloor=\lfloor (n+1)/(n+1)\rfloor=1$. If $a_v\notin S(g)$ then $d_{a_v,g}=1$ and, $\lfloor x_{a_v,g}/d_{a_v,g}\rfloor=x_{a_v,g}\ge1$. Thus $\lfloor x_{a_v,g}/d_{a_v,g}\rfloor\ge1$ for every good $g$. Since $a_v\in S(g_v)$, the first case applies with $g=g_v$, giving $\lfloor x_{a_v,g_v}/d_{a_v,g_v}\rfloor=1$. Hence $u_v(\mathbf{X}_v)=1$. 

Let $a_v\notin I$ then $S(g_v)\cap I=\emptyset$, so $g_v$ is unowned and $x_{a_v,g_v}\le3<n+1=d_{v,g_v}$, giving $u_v(\mathbf{X}_v)=0$. Hence the total welfare is $|I|=k$. 

Next we will show that this allocation is envy-free. Fix agents $a_v\ne a_{v'}$ and consider the good $g_v$, for which $d_{a_v,g_v}=n+1$. If $a_v\in I$, then $\mathrm{own}(g_v)=a_v\ne a_{v'}$, so $x_{a_{v'},g_v}\le3$, giving $u_v(\mathbf{X}_{v'})\le\lfloor x_{a_{v'},g_v}/(n+1)\rfloor=0<1=u_v(\mathbf{X}_v)$. If $a_v\notin I$, then $g_v$ is unowned, so again $x_{a_{v'},g_v}\le3$ and $u_v(\mathbf{X}_{v'})=0=u_v(\mathbf{X}_v)$. In both cases $a_v$ does not envy $a_{v'}$, so $\mathbf{X}$ is envy-free.
\medskip
In reverse direction, suppose $\mathbf{X} = (X_1,\ldots,X_n)$ is a solution to $(N,G,D,Q,w)$. For each agent $a_v\in V$, define $t_{a_v} \;=\; u_{a_v}(\mathbf{X}_v)$. Since $t_v=\min_g\lfloor x_{a_v,g}/d_{a_v,g}\rfloor$, we have $x_{a_v,g}\ge t_{a_v}d_{a_v,g}$ for every good $g$. By summing this over all agents, for every good $g$, we obtain the following.
\begin{equation}
\sum_{a_v\in V}t_{a_v}d_{a_v,g} \;\le\; \sum_{a_v\in V}x_{a_v,g} \;=\; q_g \;=\; 2n+1. \label{eq:np-hard-sum}
\end{equation}
Applying ~\eqref{eq:np-hard-sum} to the good $g_v$, whose support is $\{a_v\}$, i.e., only agent $a_v$ has demand $n+1$ for $g_v$, and every other agent has demand $1$, we obtain that $t_{a_v}(n+1)\le t_{a_v}(n+1)+\sum_{a_v'\ne a_{v}}t_{a_v'}\le2n+1<2(n+1)$, so $t_v<2$. Therefore, $t_v\in\{0,1\}$, and this holds for every $v\in V$.Let $T = \{a_v\in V:t_v=1\}$.
Since the welfare of $\mathbf{X}$ was at least $k$, it follows that $|T| \ge k$. Now we show that $T$ is an independent set.
Fix any edge $e=\{u,v\}\in E$. Applying~\eqref{eq:np-hard-sum} to the edge good $g_e$, whose support is $\{a_u,a_v\}$ i.e., both $a_u$ and $a_v$ have demand $n+1$ for $g_e$, and every other agent has demand $1$, which gives $(t_{a_u}+t_{a_v})(n+1)\le \sum_{a_{v'}\in V}t_{a_{v'}}d_{a_{v'},g_e}\le2n+1<2(n+1)$. Therefore, we obtain that $t_{a_u}+t_{a_v} < 2$. Since $t_{a_u},t_{a_v}\in\{0,1\}$, this means $t_{a_u}$ and $t_{a_v}$ cannot both equal $1$, i.e.,\ both $u$ and $v$ cannot belong to $T$. This completes the proof for the decision version of \LEFCASMshort.

It remains to observe that the reduction actually gives a bijection between an independent sets of size $\ell$ and a complete envy-free allocation of welfare $\ell$. Therefore, this is an approximation-preserving reduction from the optimisation version of {\sc Independent Set} to that of \LEFCASMshort.

\end{proof}

Since \LEFCASMshort is NP-hard in general, we next investigate tractable regimes. In particular, we explore two directions: polynomial-time algorithms for restricted classes of instances and parameterized algorithms.

\subsection{Polynomial-Time Solvable Cases}

We first consider the case of single good and show that if an envy-free allocation exists, then in fact the allocation output by  Algorithm~\ref{algo:heterogeneous} maximizes the welfare.  

\begin{theorem}
\label{thm:single-good-welfare}
    \LEFCASMshort can be solved in $\mathcal{O}(n \log q)$ time for a single good.
\end{theorem}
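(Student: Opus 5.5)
The approach is to show that, whenever an EF allocation exists, the allocation returned by Algorithm~\ref{algo:heterogeneous} has maximum welfare among all EF allocations. If none exists, Theorem~\ref{thm:single-good-types} already detects this, and the running time bound $\mathcal{O}(n\log q)$ is inherited directly from that theorem. What remains is the optimality argument.

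First I would describe the welfare of an arbitrary EF allocation. Let $(x_1,\dots,x_n)$ be EF and let $x_p=\max_i x_i$. The argument in the proof of Theorem~\ref{thm:single-good-types} shows that $\lfloor x_i/d_i\rfloor=\lfloor x_p/d_i\rfloor$ for every $i$. Hence the welfare equals $W(x_p):=\sum_i \lfloor x_p/d_i\rfloor$, so it depends only on the maximum $x_p$. The same argument shows $f_D(x_p)\le q$, which means $x_p$ is a feasible candidate in the binary search. Since $x^\star$ is the largest integer in $[0,q]$ with $f_D(x^\star)\le q$, and $x_p\le q$, we get $x_p\le x^\star$.

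Next I would use that $W$ is monotone non-decreasing in its argument, since each term $\lfloor x/d_i\rfloor$ is. This gives $W(x_p)\le W(x^\star)$. The algorithm's output $(y_1,\dots,y_n)$ satisfies $\lfloor y_i/d_i\rfloor=t_i=\lfloor x^\star/d_i\rfloor$, as shown in the proof of Theorem~\ref{thm:single-good-types}. Its welfare is therefore exactly $W(x^\star)$, which is at least the welfare of any EF allocation. So it is optimal.

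The only subtle point is the consistency between existence and optimality. The algorithm's output is built from $x^\star$, yet the maximum entry of that output could be smaller than $x^\star$ if the slack is not fully used. The welfare computation does not need the output's maximum to equal $x^\star$; it only needs $y_i\in[t_id_i,x^\star]$, which the construction guarantees. I would therefore state that interval property explicitly, and then the argument closes.
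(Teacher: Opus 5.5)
Your proposal is correct and follows essentially the same argument as the paper. Any EF allocation with maximum entry $x_p$ gives each agent exactly $\lfloor x_p/d_i\rfloor$, the feasibility condition $f_D(x_p)\le q$ forces $x_p\le x^\star$, and monotonicity of $\lfloor\cdot/d_i\rfloor$ then bounds its welfare by that of the algorithm's output. Your explicit observation that the output satisfies $y_i\in[t_id_i,x^\star]$, so that its welfare is exactly $\sum_i\lfloor x^\star/d_i\rfloor$, makes precise a step the paper leaves implicit.
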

\begin{proof}
Let $(N,G,Q,D)$ be an instance of \LEFCASMshort, and suppose that an EF allocation exists for this instance. Then, as argued in Theorem~\ref{thm:single-good-types}, the allocation $\mathbf{X} = (x_1,\ldots,x_n)$, returned by Algorithm~\ref{algo:heterogeneous} is complete and envy-free.  %in $O(n \log q)$ time.
It remains to show that no complete envy-free allocation achieves strictly higher welfare. Let  $\mathbf{Y} = (y_1,\ldots,y_n)$ be a complete envy-free allocation for $(N,G,Q,D)$, and let $y^\star = \max_i y_i$. 
Clearly, the utility of every agent $i$ is at least $\lfloor y^\star/d_i \rfloor$, otherwise the agent $i$ envies the agent who received $y^\star$ goods.  Thus, every agent $i$ receives at least  $\lfloor y^\star/d_i \rfloor\cdot d_i$ goods. Hence, $\sum_{i}\lfloor y^\star/d_i \rfloor\cdot d_i \leq q$. Thus, by the choice of $x^\star$ in the algorithm, $x^\star \geq y^\star$. Now we claim that for every agent $i \in [n]$, its utility is $u_i(\mathbf{y}) \le \lfloor y^\star/d_i \rfloor$. Suppose not, then $\lfloor y_i/d_i \rfloor > \lfloor y^\star/d_i \rfloor$, which implies $y_i > y^\star$, contradicting the definition of $y^\star$.
Thus, utility of every agent is at most $\lfloor y^\star/d_i \rfloor \leq \lfloor x^\star/d_i \rfloor$. Thus, $\sum_{i\in [n]} \lfloor y^\star/d_i \rfloor \leq \sum_{i\in [n]} \lfloor x^\star/d_i \rfloor$. Hence, $(x_1,\ldots,x_n)$ maximizes the welfare.
\end{proof}

Next, we consider the case of identical demands, where  all the agents share the same demand vector $\mathbf{d}=(d_1,\ldots,d_m)\in\mathbb{Z}_{> 0}^m$.
We  assume that there are at least two goods. The algorithm is based on the idea that the utility of each agent should be same. So, we first allocate goods so that every agent attains the maximum possible common utility. The remaining goods are then distributed in a manner that the utility of any agent  does not increase. 

\begin{theorem}
\label{thm:identical-multi}
\LEFCASMshort can be solved in $\mathcal{O}(nm)$ time when all the agents have the same demand vector and there are at least two goods.  
\end{theorem}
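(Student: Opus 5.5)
The plan is to show that in any EF allocation all agents have the same utility, bound that common utility by a simple counting argument, and then build a complete allocation that meets the bound exactly. The case $n=1$ is trivial, since the only complete allocation is EF and optimal, so assume $n\ge 2$.

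\textbf{Step 1 (structure of EF allocations).} With identical demands every agent evaluates bundles by the same function $u(\mathbf{X}_k)=\min_j\lfloor x_{kj}/d_j\rfloor$. Envy-freeness then gives $u(\mathbf{X}_i)\ge u(\mathbf{X}_k)$ and $u(\mathbf{X}_k)\ge u(\mathbf{X}_i)$ for all $i,k$. Hence every agent has the same utility $t$, and the welfare equals $nt$.

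\textbf{Step 2 (upper bound on $t$).} Each agent must hold at least $t d_j$ copies of every good $g_j$. Summing over agents gives $ntd_j\le q_j$, so
\[
t\le t^\star:=\min_{j}\left\lfloor \frac{q_j}{n d_j}\right\rfloor .
\]
Therefore no EF allocation has welfare above $nt^\star$.

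\textbf{Step 3 (construction achieving $nt^\star$).}
\begin{itemize}
\item Give every agent $t^\star d_j$ copies of each good $g_j$. The remainder of good $g_j$ is $r_j=q_j-nt^\star d_j\ge 0$.
\item Fix a minimising good $g_{j^\star}$. By choice of $t^\star$ we have $q_{j^\star}<n(t^\star+1)d_{j^\star}$, so $r_{j^\star}\le nd_{j^\star}-1$.
\item Distribute $r_{j^\star}$ greedily, giving each agent at most $d_{j^\star}-1$ extra copies. If copies remain after every agent has been topped up, at most $n-1$ are left; give all of them to a single agent $A$.
\item Pick another good $g_{j'}\ne g_{j^\star}$, which exists because $m\ge2$, and an agent $B\ne A$. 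Give all remainders $r_j$ of every $g_j\ne g_{j^\star}$ to $B$. In particular $A$ keeps exactly $t^\star d_{j'}$ copies of $g_{j'}$.
\end{itemize}

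\textbf{Step 4 (verification).} The allocation is complete. Every agent has at least $t^\star d_j$ copies of each good, so every agent has utility at least $t^\star$. We check that every agent has utility exactly $t^\star$:
\begin{itemize}
\item Agent $A$ is capped at $t^\star$ by $g_{j'}$, since it holds exactly $t^\star d_{j'}$ copies of it.
\item Every other agent, including $B$, holds fewer than $(t^\star+1)d_{j^\star}$ copies of $g_{j^\star}$, so it is capped at $t^\star$ by $g_{j^\star}$.
\end{itemize}
Thus every bundle is worth exactly $t^\star$ to everyone. The allocation is therefore EF, and by Step 2 it is welfare-maximising.

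Computing $t^\star$ and writing down the allocation takes $\mathcal{O}(nm)$ time. The main obstacle is Step 3: the leftover copies of the bottleneck good may be too many to spread so that every agent stays below $(t^\star+1)d_{j^\star}$. The second good is used to cap the one agent who absorbs the overflow, and this is exactly where the hypothesis $m\ge 2$ is needed.
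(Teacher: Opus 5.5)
Your proposal is correct and follows essentially the paper's argument: all agents must have equal utility under identical demands, the counting bound gives $t\le t^\star$, and the completion uses two agents and two goods so that every bundle is capped at exactly $t^\star$. One simplification: the greedy top-up of $g_{j^\star}$ is unnecessary, since $A$ is already capped by $g_{j'}$ and $B$ by $g_{j^\star}$, so you can give all of $r_{j^\star}$ to $A$, and the good need not be a minimising one. That is exactly the paper's construction, which gives one good's entire remainder to agent~1 and all other remainders to agent~2.
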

\begin{proof}
Let $(N,G,Q,D)$ be an instance of \LEFCASMshort. All the agents share the same demand vector $\mathbf{D}=(d_1,\ldots,d_m)\in\mathbb{Z}_{> 0}^m$ where $d_j$ is demand of agents for good $g_j$. Due to Theorem~\ref{thm:multi-good-ef}, envy-free allocation always exists. We next compute the one that maximizes welfare. Let
$  t^\star = \lfloor \min_{g_j \in G} \frac{q_j}{n\,d_j}\rfloor$. Intuitively, it denotes the maximum number of complete demand-bundles simultaneously receivable by every agent.
Since all agents have identical preferences, any envy-free allocation must give every agent
the same utility (otherwise a lower-utility agent would envy a higher-utility peer). We construct an allocation as follows. We first give $t^\star d_j$ copies of every good to every agent. Let $r_j = q_j - n\,t^*\,d_j$ for every $g_j\in G$ denote the number of remaining copies of $g_j$. Clearly, $r_j\geq 0$.  
Let $g_j$ be a good with $r_j>0$. We assign all $r_j$ copes of $g_j$ to agent $1$ and all the remaining copies of other goods to agent $2$. Note that utility of each agent is $t^\star$. So, the allocation is envy-free. We also assigned all the goods. Next, we argue that this allocation maximizes welfare. Suppose not, then there exists another envy-free allocation with utility of every agent $t$. Note that all agents achieve the
same utility $t$ as the demand is same, otherwise the allocation is not envy-free. So, the welfare is $nt$.
Suppose $t \ge t^*+1$: each agent would need at least $(t^*+1)d_j$ copies of
every good $g_j$, requiring $n(t^*+1)d_j \le q_j$ for all
 $g_j\in G$, i.e., $t^*+1 \le q_j/(nd_j)$ for all $g_j\in G$.
This contradicts $t^* = \lfloor\min_{j} q_j/(nd_j)\rfloor$.
Hence $t \le t^*$, and welfare $n\,t^*$ is maximum. 
\medskip
Next, we argue the running time. Computing $t^*$ takes $\mathcal{O}(m)$.
The first step of allocation takes $\mathcal{O}(nm)$.
The next step of allocating remaining goods take $\mathcal{O}(m)$ time.
Hence total running time is $\mathcal{O}(nm)$.
\end{proof}

\subsection{A Pseudo-Polynomial FPT Algorithm Parameterized by $n$}
\label{sub:fpt-n}

In this section, we design an algorithm with running time $2^{\mathcal{O}(n^4 \log n)} \cdot \mathrm{poly}(n, m, \log(q_{\max})) \cdot q_{\max}$, where $q_{\max}$ is the maximum number of copies of a good.  Observe that this running time is pseudo-polynomial in the numeric parameter $q_{\max}$, and hence does not constitute an FPT running time in the standard sense. Such pseudo-polynomial dependence on numeric input values is common for parameterized problems involving integer quantities~\cite{DBLP:journals/talg/FominGKSS24}.

The high-level idea of our algorithm is as follows. If $m\leq n^2+1$, we can formulate an MILP with $n+m$ integer variables that exactly models the problem. Since MILP is FPT in the number of variables (cf.~Proposition~\ref{lem:lenstra} below), we obtain an FPT algorithm w.r.t.~$n$ in this case. Now let us consider the case when $m>n^2+1$. Suppose we know the utility attained by each agent in an welfare maximizing complete allocation.
%
% optimal solution.
 Then, we can find a welfare maximizing envy-free allocation using a pseudo-polynomial time procedure. Consequently, the only exponential part of the algorithm is determining the utility of each agent. Note that we can guess the utility of each agent in welfare maximizing envy-free allocation and then run the pseudo-polynomial time procedure. This takes $q_{\max}^n\cdot \mathrm{poly}(n,m,q_{\max})$. Instead, we use  ILP to find utility of each agent in welfare maximizing envy-free allocation, which takes $n^{\mathcal{O}(n)} \cdot \mathrm{poly}(n, m, \log q_{\max})$ time. Note that this optimal utility vector automatically defines a partial envy-free allocation; however not all copies of all goods may be assigned. We subsequently show that this partial envy-free allocation with optimal welfare can be extended to a complete allocation while maintaining both envy-free and optimality in psuedo-polynomial time. Since the latter case is algorithmically more interesting, we begin with it.

\paragraph{Case when $m > n^2 + 1$.}
We begin with our \emph{extension} algorithm. 

\begin{lemma}
    \label{lem:completion}
    Let $(N,G,Q,D)$ be an instance of \LEFCASMshort such that $m > n^2+1$. %Suppose that $W$ is the welfare of the solution to $(N,G,Q,D)$.   
    Suppose that $\mathbf{Y}$ is an partial envy-free allocation for the instance $(N,G,Q,D)$ with welfare $W$. 
    Then, we can obtain a complete envy-free allocation with welfare $W$ in pseudo-polynomial time. 
\end{lemma}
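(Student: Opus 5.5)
The plan is to keep every agent's own bundle essentially unchanged in the sense that matters for envy, and to route all leftover copies to bundles where they provably cannot create envy. Let $\mathbf{Y}$ be the given partial envy-free allocation, $t_i=u_i(\mathbf{Y}_i)$, and $r_j=q_j-\sum_i y_{ij}\ge 0$ the leftover copies of $g_j$. Adding copies to agent $k$'s bundle never decreases $u_k(\mathbf{Y}_k)$ and never decreases welfare. It can, however, raise $u_i(\mathbf{Y}_k)$ for some viewer $i\neq k$. So the whole task is to preserve, for every ordered pair $(i,k)$, a \emph{witness good} $w(i,k)$ with $\lfloor y_{k,w(i,k)}/d_{i,w(i,k)}\rfloor\le t_i$ (and to note that welfare stays $\ge W$, hence $=W$ if $W$ is optimal; in any case at least $W$).

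\textbf{Step 1: witnesses.} Since $\mathbf{Y}$ is envy-free, for every $i\neq k$ there is a good with $\lfloor y_{kj}/d_{ij}\rfloor\le t_i$. Fix one such good as $w(i,k)$. The set $W_k=\{w(i,k): i\neq k\}$ of goods that must not grow in bundle $k$ has size at most $n-1$. In total there are at most $n(n-1)$ witness pairs.

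\textbf{Step 2: routing leftovers.} Call agent $k$ \emph{admissible} for good $g_j$ if $g_j\notin W_k$. If $g_j$ has an admissible agent, dump all $r_j$ copies on one such $k$. Every witness of bundle $k$ is untouched, so no viewer starts envying $k$; agent $k$ itself only gains. The remaining goods are those lying in $W_k$ for \emph{every} $k$. Each such good consumes at least $n$ witness pairs, so there are at most $n-1$ of them.

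\textbf{Step 3: handling the problematic goods (main obstacle).} I expect this step to be the hard one. Because $m>n^2+1\ge n(n-1)+2$, at least two goods, say $g_a,g_b$, are not witnesses for any pair. My first attempt is to re-select witnesses so that the problematic set becomes empty. For a problematic good $g_j$ and a chosen receiver $k$, each pair $(i,k)$ with $w(i,k)=g_j$ needs a new witness. To obtain one, I would strip bundle $k$ of enough copies of $g_a$ (or $g_b$) so that $\lfloor y_{ka}/d_{ia}\rfloor\le t_i$. This is only allowed if $k$'s own utility, and hence welfare, is preserved. The stripped copies of $g_a$ then become leftovers, which are dumped on a different receiver for which $g_a$ is still a non-witness.

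If this fails because $k$ needs those copies of $g_a$ itself, I would fall back to a pseudo-polynomial search. The $\le n-1$ problematic goods, together with $g_a,g_b$, form a set of $O(n)$ goods. For each such good, decide how its $q_j\le q_{\max}$ copies are split among agents. Use a dynamic program over goods whose state is the vector, over pairs $(i,k)$, of ``is this pair already blocked'', which has $2^{n^2}$ possible values. Each transition enumerates splits that keep every agent's own threshold $y_{kj}\ge t_kd_{kj}$. This brings in the $q_{\max}$ factor that makes the procedure pseudo-polynomial. All other goods are routed as in Step 2.

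\textbf{Step 4: verification.} Finally, verify completeness, that own utilities are $\ge t_k$ (so welfare is $\ge W$), and that every ordered pair still has a blocking witness, giving envy-freeness.
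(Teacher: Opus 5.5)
\textbf{The gap is in Step 3.} Steps 1 and 2 are fine: the witness counting, giving at most $n-1$ problematic goods and at least two non-witness goods, is correct. But Step 3 carries the whole content of the lemma, and it is not proved.

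Your first attempt is dropped exactly when the receiver $k$ needs its copies of $g_a$, i.e.\ when $t_kd_{ka}\ge (t_i+1)d_{ia}$ for some viewer $i$. The fallback is only a search procedure. You never show that its restricted space contains a complete envy-free allocation. That space freezes the non-special goods as routed in Step 2 and only re-splits the $\mathcal{O}(n)$ special goods subject to $x_{kj}\ge t_kd_{kj}$. Existence is exactly what the lemma asserts, and a search that may answer ``no'' does not establish it.

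The DP is also not pseudo-polynomial as the lemma claims. It has $2^{n^2}$ states, and each transition enumerates all splits of $q_j$ copies among $n$ agents. That is $\binom{q_j+n-1}{n-1}$, roughly $q_{\max}^{n-1}$, per good.

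Finally, Step 2 can raise the receiver's own utility, so you only get welfare $\ge W$, not the stated $W$. This suffices where the lemma is used (to bound $W_{ILP}\le W^\star$), but it is weaker than the statement.

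\textbf{How the paper closes this step.} Phase~3 uses two spare unmarked goods $g_\ell,g_{\ell+1}$ so that every agent's own utility is attained by at least two goods. Phase~4 then hands out the copies of each remaining good $g$ one at a time to a source of the potential envy graph $P_g$. When no source exists, it rotates bundles along a shortest directed cycle. An arc $i\to j$ forces $u_i(\mathbf{Y}_j)=u_i(\mathbf{Y}_i)$, so the rotation preserves utilities and envy-freeness. In particular, bundles change owners, which your scheme never allows.

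\textbf{How your own first attempt can be completed.} It works once the receiver is chosen correctly.
\begin{enumerate}
\item Fix a non-witness good $g_a$ and take $k\in\arg\min_{k''\in N} t_{k''}d_{k''a}$.
\item Lower $k$'s holding of $g_a$ to exactly $t_kd_{ka}$. This is allowed because $y_{ka}\ge t_kd_{ka}$, and it pins $u_k=t_k$.
\item For every viewer $i$, the choice of $k$ gives $\lfloor t_kd_{ka}/d_{ia}\rfloor\le\lfloor t_id_{ia}/d_{ia}\rfloor=t_i$. So $g_a$ blocks all pairs $(i,k)$ at once, whatever else bundle $k$ receives.
\item Send all leftover copies of every other good to $k$. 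Send the surplus of $g_a$ to any $k'\ne k$. No pair $(i,k')$ loses its witness, because $g_a$ is a witness for none.
\item There are at least $n+2$ non-witness goods. Choose $g_a$ among them so that it is not the unique bottleneck of $k'$. Then every utility stays equal to $t_i$, and the welfare is exactly $W$.
\end{enumerate}
This needs $n\ge 2$ and runs in polynomial time. It is considerably simpler than the paper's iterative procedure.
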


\begin{proof}
    Let $G'$ be the set of goods whose at least one copy is not assigned in the allocation ${\mathbf Y}$. Our algorithm proceeds in several iterations until $G'$ becomes empty, and each iteration consists of four phases, described below. In every iteration, and moreover after every phase during an iteration, we maintain the invariant that the current allocation $\mathbf{Y}$ is envy-free and has welfare $W$ (we formally prove this later). During the following algorithm, whenever we (re)assign goods, the current assignment $\mathbf{Y}$ is updated accordingly.
     
     \begin{description}[leftmargin=*]
  \item[Phase 1.]  For every agent $i\in N$, we begin by constructing a marked set of goods $M_i$ as follows. If a good $g$ belongs to a marked set, we call \emph{marked}, otherwise \emph{unmarked}. Initially, let $M_i=\emptyset$, for each $i\in N$.  Let $M_{i}^k=\arg \min_{g_j\in G}\lfloor y_{ij}/d_{kj}\rfloor$, where $k\in N$, denote the set of goods that are \emph{responsible} for the utility of agent $k$ in $i$'s bundle. %, for all $k\in [n]$.
  Here $k$ takes value $i$ as well. 
  We iterate over each $i \in N, k \in N$, and if there exists a good in $g \in M_i^k$ that is unmarked (i.e., does not belong to any $M_{i'}$ so far), then we add it to $M_i$; otherwise if there is no such good, then we add an arbitrary good from $M_i^k$ to $M_i$.
   Intuitively, we try to mark an unmarked good that is  \emph{responsible} for the utility of agent $k$ in $i$'s bundle, if it exists.  Let $\mathcal{M} = \bigcup_{i\in N}M_i$.  Note that $|\mathcal{M}|\leq n^2$. 

  Without loss of generality, let $\mathcal{M} =\{g_1,\ldots,g_{\ell-1}\}$ for some $1 \le \ell < n^2+1$.
    
      \item[Phase 2.] Suppose that there exists a good $g\in G'$ such that for an agent $i\in N$,   $g\notin M_i$. We assign all the remaining copies of $g$ to $i$ and remove $g$ from $G'$. Intuitively, good $g$ alone does not  assign utility to any agent in $i's$ bundle. So, assigning additional copies to $i$ does not change the utility of $i$ or any other agent towards $i$'s bundle.

    \item[Phase 3.] Since $m>n^2+1$, there are at least two goods, namely $g_\ell, g_{\ell+1}$, that do not belong to $\mathcal{M}$.  In this phase, we reallocate some goods in each bundle so that each agent has two goods responsible for its utility in its own bundle. In particular, we reallocate goods such that for all the agents $i$ except $n$, $g_\ell$ is additionally responsible for their utility in their own bundle, and for the agent $n$, $g_{\ell+1}$ is additionally responsible for its utility in its own bundle. 
    We achieve it using the following procedure.   .

For each agent $i\in N\setminus \{n\}$, let $z_i \geq 0$ be the minimum  
    number of goods such that $\lfloor (y_{i\ell}-z_i)/d_{il}\rfloor = \min_{g_j\in G}\lfloor y_{ij}/d_{ij}\rfloor$. For each $i\in N\setminus \{n\}$, remove $z_i$ copies of good $g_\ell$ from $\mathbf{Y_i}$ and allocate to agent $a_n$. For the agent $a_n$ remove $z_n$ copies of good $g_{\ell+1}$ from $a_n$ such that  $\lfloor (y_{n(\ell+1)}-z_n)/d_{i(l+1)}\rfloor = \min_{g_j\in G}\lfloor y_{nj}/d_{nj}\rfloor$ and allocate to any other agent, say we assign to $a_1$. Note that, for each $i\in N\setminus \{n\}$, we add $g_\ell$ to $M_i$ and $g_{\ell+1}$ to $M_n$.\label{phase3}
      
        \item[Phase 4.] Suppose that there exists a good $g\in G'$ that is not assigned to any agent in Phase 2. Then, $g\in \cap_{i\in N}M_i$. If $g$ is responsible for providing utility to an agent $i$ in its own bundle, but not any other agent in $i$'s bundle, assign all the remaining copies of $g$ to $i$.  Otherwise, due to construction of $M_i$, $g$ is solely responsible for providing utility to someone (other than $i$) in agent $i$'s bundle, for each $i\in N$. We construct a \emph{potential envy graph} corresponding to good $g$, say $P_g$, as follows: the vertex set of $P_g$ is the set of all agents, and there is a directed edge from agent $i$ to agent $j$ iff $i$ would envy $j$ after allocating a copy of $g$ to $j$. If there exists a source in $P_g$ (vertex in $P_g$ with in-degree $0$), we assign a copy of $g$ to a source. Otherwise, let $C=(i_1,\ldots,i_s)$ be a shortest directed cycle in $P_g$. We eliminate this cycle by reallocating bundles anticlockwise, i.e., we assign $i_{k}$'s bundle to $i_{k-1}$, where $k\in \{2,\ldots,s\}$, and $i_1$'s bundle to $i_s$. We repeat this process until we get a source (we will argue that we can always find a source). Then, we assign a copy of $g$ to a source. We remove $g$ from $G'$ if its all the copies are assigned.%\todo{Assigning one copy may not remove it from $G'$, right?} \label{phase4}
    \end{description}
    Repeat Phase 1 to Phase 4, until all the copies are assigned, i.e., when  $G'$ becomes $\emptyset$.

    Next, we prove the correctness of the algorithm. Towards this, we argue that, in each iteration,  after the execution of Phase 2 or Phase 3 or Phase 4, the allocation is envy-free and the welfare does not change (note that in Phase 1, we only mark some goods, but the allocation does not change). Furthermore, we argue that the algorithm terminates.

    \begin{claim}\label{clm:phase1}
    \label{clm:clm-phase1}
        The allocation obtained after Phase 2 is envy-free and the welfare does not change.
    \end{claim}
 \begin{proof}
         Suppose we assign all the remaining copies of a good $g$ to an agent $i$. Since $g\notin M_i$, it is not solely responsible for giving utility to any agent in $i$'s bundle, including $i$. Hence, $\min_{g_j\in G}\lfloor y_{kj}/d_{kj} \rfloor$ does not change for any agent $k$. Thus, no agent envies agent $i$ after this assignment. Also note that the utility of $i$ also does not change. Therefore, the welfare of the resulting allocation is still $W$.
 \end{proof}
    \begin{claim}\label{clm:phase2}
        The allocation obtained after Phase 3 is envy-free and the welfare does not change.
    \end{claim}
  \begin{proof}
       For each agent $i\in N\setminus \{n\}$, we remove $z_i$ copies of $g_\ell$ from agent $i$'s bundle and assign it to $n$. Since, $\lfloor (y_{i\ell}-z_i)/d_{i\ell}\rfloor = \min_{g_j\in G}\lfloor y_{ij}/d_{ij}\rfloor$, the utility of agent $i$ does not decrease. Since initially the allocation was envy-free, and we only removed some copies of a good, no agent $k \neq i$ would start envying agent $i$. %Suppose $i\neq n$. 
       Then, we added  $\sum_{i\in N\setminus \{n\}}z_i$ copies of good $g_\ell$ in the bundle of agent $n$. Since $g_\ell$ is unmarked in Phase 1,  the utility of agent $n$ cannot be solely due to $g_\ell$, even if it gives the minimum utility. So, the utility of agent $n$  does not change.  Similarly, $g_\ell$ is also not solely responsible for utility of any other agent $k \neq n$ in $n$'s  bundle. Thus, no agent would start envying agent $n$. Similarly, when $i=n$, we added copies of $g_{\ell+1}$ in agent $1$'s bundle.  Since $g_{\ell+1}$ is unmarked in Phase 1, using the same arguments as earlier, there is no envy and the utility of any agent does not change.
  \end{proof}  
  \begin{claim}\label{clm:phase3}
        The allocation obtained after Phase 4 is envy-free and the welfare does not change.
    \end{claim}
    \begin{proof}
       If $g$ is responsible for providing utility to an agent $i$ in its own bundle, but not for any other agent in $i$'s bundle, then due to Phase 3, agent $i$ has at least two goods providing it minimum utility. So, utility of $i$ does not change. Since $g$ was not responsible for any agent to provide  utility in $i$'s bundle, the utility of other agents for $i$'s bundle does not change. So, the resultant allocation is envy-free. 

       Next, we consider the other case. Clearly, due to the construction of the graph, if we assign a copy of $g$ to a source agent, then no agent would envy source agent. The utility of source agent does not change as it has at least two goods responsible for its utility. Next, we argue that when we reallocate the bundles along the cycle, the allocation remains envy-free. Consider an arc $i\rightarrow j$ in $P_g$. Due to the construction of $P_g$, we know that $u_i(\mathbf{Y_i})< u_i(\mathbf{Y_j}+g)$. Since $\mathbf{Y}$ is an envy-free allocation, we know that $u_i(\mathbf{Y_i})\geq  u_i(\mathbf{Y_j})$. Since by adding one copy, the utility can increase by at most one, $u_i(\mathbf{Y_i})=  u_i(\mathbf{Y_j})$.  Thus, when we allocate $j$' bundle to $i$, the utility of $i$ does not change. Moreover, since we did not assign any new copy to any agent in this step, agent $i$ does not envy anyone after receiving the new bundle. Thus, the allocation remains envy-free.  
       
        Now we argue that, after reallocating the bundles along the cycle  $C=(i_1,\ldots,i_s)$, the number of edges in the resulting potential envy graph decreases, which will imply that after at most $n^2$ steps we will find a source agent. As argued above, the utility of agents does not change and the envy of agent $i$ towards an agent $j$ is due to the bundle they have. Since the actual bundles did not change (only their owners), the number of edges in the new graph cannot increase. Next, we argue that edges between the agents in $C$ no longer exist in the resulting graph after reallocation. Without loss of generality, let $i_1\rightarrow i_k$ be an edge, for some $k\in [s] \setminus \left\{1\right\}$. Recall that the utility of agent $i_1$ did not change after reallocation. We first note that if $k= s$, then $i_s$ received bundle of $i_1$ which has two goods responsible for the utility of $i_1$, thus, $i_1$'s utility does not increase towards $i_s$ by allocating an additional copy of $g$. Next, we consider $k\neq s$. In this case $i_k$ received bundle of $i_{k+1}$ during reallocation. This means $i_1 \rightarrow i_{k+1}$ exists before reallocation. 
        Note that $3 \le k+1 \le s$, which implies that $C' = (1, k+1, \dots, s)$ is a directed cycle in the original envy graph that is strictly shorter than $C$, a contradiction.     
    \end{proof}
    The algorithm stops after at most $m q_{\max}$ iterations as we assign at least one copy of a good in each iteration, therefore $G'$ decreases by at least $1$ after at most $q_{\max}$ iterations and never increases.
\end{proof}

In light of Lemma~\ref{lem:completion}, now we focus on finding an envy-free integral allocation, which could be partial, whose welfare is same as the welfare of an envy-free allocation that maximises welfare among all envy-free allocations. Towards, this we focus on finding utility of each agent in an welfare maximizing envy-free allocation. % with maximum welfare. 
Thus, as discussed earlier, we first state the ILP with $n$ variables for finding the utility vector corresponding to a welfare maximizing envy-free (potentially partial) allocation. 
For agents $i,k\in N$ such that $i\ne k$, we define $\rho_{ik} \;=\; \min_{g_j\in G} \frac{d_{kj}}{d_{ij}}\,,$
which is achieved at some goods. We pick any one of them and call it  $g_{ik}$. Computing $\rho_{ik}$ and $g_{ik}$ for all ordered pairs takes $\mathcal{O}(n^2m)$ time. 

\begin{lemma}
\label{lem:rho-identity}
Let $\mathbf{t}=(t_1,\ldots,t_n)\in\mathbb{Z}_{\ge0}^n$, and let ${\bf Y}=({\bf Y_1},\ldots,\bf{Y_n})$ be a partial allocation  where ${\bf Y_i}=(y_{i1},\ldots,y_{im})$ with $y_{ij}=t_id_{ij}$ for $g_j\in G$ . Then $u_i({\bf Y_k})=\lfloor t_k\,\rho_{ik}\rfloor$ for every $i\ne k$. %Furthermore, if $t_k\,d_{k,g_{ik}} \;\le\; (t_i+1)\,d_{i,g_{ik}}-1$, then $u_i({\mathbf A_k})\leq t_i$.
\end{lemma}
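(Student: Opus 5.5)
The claim follows by substituting $y_{kj}=t_k d_{kj}$ into the Leontief utility and using monotonicity of the floor function to move the minimum inside. By definition,
\[
u_i(\mathbf{Y}_k)=\min_{g_j\in G}\left\lfloor \frac{y_{kj}}{d_{ij}}\right\rfloor=\min_{g_j\in G}\left\lfloor t_k\,\frac{d_{kj}}{d_{ij}}\right\rfloor .
\]
The key step is that $x\mapsto\lfloor x\rfloor$ is non-decreasing. Since $t_k\ge 0$, the map $r\mapsto t_k r$ is also non-decreasing. Hence, for every finite set of reals $\{r_j\}$, we have $\min_j\lfloor t_k r_j\rfloor=\lfloor t_k\min_j r_j\rfloor$.

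To justify this, let $j^\ast$ attain $\min_j r_j$. Then $\lfloor t_k r_{j^\ast}\rfloor$ appears among the terms, so the minimum is at most this value. Conversely, for every $j$ we have $t_k r_j\ge t_k r_{j^\ast}$, hence $\lfloor t_k r_j\rfloor\ge\lfloor t_k r_{j^\ast}\rfloor$, and the minimum is at least this value.

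Applying this with $r_j=d_{kj}/d_{ij}$, whose minimum over $g_j\in G$ is $\rho_{ik}$ by definition, gives $u_i(\mathbf{Y}_k)=\lfloor t_k\rho_{ik}\rfloor$. Moreover, the minimum is attained at $g_{ik}$, so $u_i(\mathbf{Y}_k)=\lfloor t_k d_{k,g_{ik}}/d_{i,g_{ik}}\rfloor$.

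There is no real obstacle. The points to check are that all $d_{ij}>0$, so the ratios are well defined (this is the positive-demand assumption of the preliminaries), and that $t_k\ge 0$, which is needed for the monotonicity step. The case $t_k=0$ is covered, since both sides are then $0$. The restriction $i\ne k$ is needed only because $\rho_{ik}$ is defined only for distinct agents; the same computation with $\rho_{ii}=1$ gives $u_i(\mathbf{Y}_i)=t_i$. Partiality of $\mathbf{Y}$ plays no role, since utilities depend only on the bundles themselves.
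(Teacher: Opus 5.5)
Your proof is correct and follows the same route as the paper: substitute $y_{kj}=t_kd_{kj}$ into the Leontief utility, then interchange the minimum with the floor to obtain $\lfloor t_k\rho_{ik}\rfloor$. You also justify that interchange explicitly, via monotonicity of the floor and $t_k\ge 0$, which the paper's one-line proof leaves implicit.
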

\begin{proof}
By the definition of utility, $$u_i({\bf Y_k})= \min_{g_j\in G} \lfloor t_kd_{kj}/d_{ij} \rfloor = \lfloor \min_{g_j\in G} t_kd_{kj}/d_{ij} \rfloor = \lfloor t_k\rho_k \rfloor.$$
\end{proof}

For brevity, in the ILP formulation, we write $d_{i,g_{ik}}$ instead of $d_{i,ik}$, which keeps the notation cleaner. The envy-freeness condition $u_i({\mathbf Y}_k)\le t_i$, by Lemma~\ref{lem:rho-identity}, can be written as
$\lfloor t_k\,\rho_{ik}\rfloor \;\le\; t_i$. This is equivalent to $t_k\,\rho_{ik} \;<\; t_i+1$. Substituting $\rho_{ik}=d_{k,g_{ik}}/d_{i,g_{ik}}$, we get $t_k\,d_{k,g_{ik}} \;<\; (t_i+1)\,d_{i,g_{ik}}$.
which is equivalent to $t_k\,d_{k,g_{ik}} \;\le\; (t_i+1)\,d_{i,g_{ik}}-1$. This inequality is the basis of our ILP with $n$ integer variables. % Now we can define ILP as:
\begin{align}
\max \quad & \sum_{i\in N} t_i \label{ilp:obj}\\
\text{s.t.}\quad
  & \sum_{i} t_i\, d_{ig_j} \;\le\; q_j && \forall\, g_j\in G \label{ilp:cap}\\
  & t_k\, d_{k,g_{ik}} \;\le\; (t_i+1)\, d_{i,g_{ik}} - 1 && \forall\, i\ne k\in N \label{ilp:ef}\\
  & t_i \in \mathbb{Z}_{\ge0} && \forall\, i\in N. \label{ilp:int}
\end{align}
We solve the ILP using the following result\footnote{We note that for solving pure ILPs with $p$ variables, there is a faster randomized algorithm that runs in time $(\log p)^{\mathcal{O}(p)} \cdot \mathrm{poly}(d, \log V)$ due to \cite{ReisR23}. However, since the running time of our algorithm is dominated by the MILP solving in the second case (which relies on Proposition~\ref{lem:lenstra}), we do not use this algorithm.}.
\begin{proposition}
[\cite{Lenstra1983,Kannan1987}]
\label{lem:lenstra}
An MILP with $p$ integer variables, $d$ continuous variables, and constraint
matrix entries with absolute value is bounded by $V$ can be solved in 
$p^{\mathcal{O}(p)}\cdot\mathrm{poly}(d,\log V)$ time.
\end{proposition}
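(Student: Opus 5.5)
This proposition is the classical Lenstra--Kannan result, cited from \cite{Lenstra1983,Kannan1987}, so I would reconstruct its proof in outline rather than derive anything new. The approach is to reduce the mixed problem to the pure-integer feasibility question in $p$ dimensions. Optimisation then follows by binary search over the objective value, which costs only a factor polynomial in $\log V$ and the encoding length. First, I would project out the continuous variables. For a fixed integer vector $z\in\mathbb{Z}^p$, the set of feasible continuous parts is a polyhedron. Feasibility of the MILP is therefore equivalent to $z$ lying in the projection $P=\{z : \exists y\in\mathbb{R}^d,\ Az+By\le b\}$, and $P$ is a rational polyhedron in $\mathbb{R}^p$. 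Rather than computing $P$ explicitly by Fourier--Motzkin, which could blow up, I would access it through an LP oracle in $\mathrm{poly}(d,\log V)$ time. Concretely, I would test membership, optimise a linear functional over $P$, and separate, all by solving LPs in the variables $(z,y)$ with the ellipsoid or interior point method.

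Second comes Lenstra's recursion on the dimension $p$. Using the oracle, I would compute an approximate L\"owner--John ellipsoid pair for $P$, after bounding $P$ using the vertex-size bound of polynomially many bits in $\log V$. Then I would apply a unimodular transformation making the ellipsoid "round". Next I would run LLL or Kannan's HKZ reduction on the lattice $\mathbb{Z}^p$ in the transformed norm, which gives the flatness dichotomy. Either the ellipsoid is fat enough that it provably contains a lattice point, and we output one. Or there is a primitive direction $c\in\mathbb{Z}^p$ in which the lattice width of $P$ is at most $f(p)$. In the latter case I would branch over the at most $f(p)+1$ integer values of $c^\top z$, and each branch gives an MILP of dimension $p-1$ with the same continuous part. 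Kannan's improvement bounds the width by $p^{\mathcal{O}(1)}$ and organises the enumeration so that the total branching is $p^{\mathcal{O}(p)}$.

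Third, I would account for the running time. The recursion tree has $p^{\mathcal{O}(p)}$ leaves, and each node does polynomial work in $d$ and the bit-length. The bit-length stays polynomial in $\log V$ (and $p,d$) because the unimodular transformations have polynomially bounded entries. This yields $p^{\mathcal{O}(p)}\cdot\mathrm{poly}(d,\log V)$ overall.

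The main obstacle is the flatness step with the improved bound, namely obtaining branching $p^{\mathcal{O}(p)}$ rather than $2^{\mathcal{O}(p^2)}$. Relatedly, one must ensure that the continuous variables do not enter the exponent. For the first issue I would rely directly on Kannan's HKZ-based analysis. For the second, the key point to verify is that all geometric work happens in the projected space $\mathbb{R}^p$, accessed only via polynomial-time LP calls.
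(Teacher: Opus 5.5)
Your sketch is correct and follows the same route as the source: the paper gives no proof of its own and simply imports the result from Lenstra and Kannan, and your outline is a faithful sketch of exactly that Lenstra--Kannan argument. It has the right ingredients: LP-oracle access to the projection onto the $p$ integer coordinates, so the $d$ continuous variables never reach the exponent; ellipsoidal rounding; flatness-based branching on a primitive lattice direction; and Kannan's HKZ-based analysis giving $p^{\mathcal{O}(p)}$ subproblems. Two small corrections: the map that turns the ellipsoid into a ball is a general linear map, with the unimodular change of basis coming from the subsequent lattice reduction; and with HKZ/SVP each node costs $p^{\mathcal{O}(p)}$ rather than polynomial time, which still leaves the total at $p^{\mathcal{O}(p)}$ times a polynomial in the full input length.
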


Note that the above ILP always gives a feasible solution, as $t_i=0$, for all $i\in N$, is a feasible solution. 

Let $(t_1,t_2,\ldots,t_n)$ be the solution returned by our ILP. We construct a partial allocation as follows: for an agent $i$ and a good $g_j$, $y_{ij}=t_id_{ig_j}$. The partial allocation for an agent $i$ is ${\mathbf Y_i}=\{y_{ij}\colon j\in G\}$. Let $W_{ILP}$ be the maximum envy-free welfare value of our ILP and $W^\star$ is the maximum welfare of a complete envy-free allocation of $(N,G,Q,D)$.

\begin{lemma}
 \label{lem:ilp-value}
     ${\mathbf Y}=({\mathbf Y_1},\ldots,{\mathbf Y_n})$ (as defined above) is an envy-free allocation, which can be obtained in $n^{\mathcal{O}(n)}\cdot\mathrm{poly}(m, n, \log q_{\max})$ time. Furthermore, $W_{ILP}=W^\star$. 
\end{lemma}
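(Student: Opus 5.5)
We prove the three parts of Lemma~\ref{lem:ilp-value} in turn: envy-freeness and feasibility of $\mathbf{Y}$, the running time, and the equality $W_{ILP}=W^\star$.

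\emph{$\mathbf{Y}$ is a feasible envy-free partial allocation.} Constraint~\eqref{ilp:cap} gives $\sum_i y_{ij}=\sum_i t_i d_{ij}\le q_j$ for every good, so $\mathbf{Y}$ is a feasible partial allocation. Since $y_{ij}=t_id_{ij}$, we get $u_i(\mathbf{Y}_i)=\min_j \lfloor t_id_{ij}/d_{ij}\rfloor = t_i$. For $k\ne i$, Lemma~\ref{lem:rho-identity} gives $u_i(\mathbf{Y}_k)=\lfloor t_k\rho_{ik}\rfloor$. By the chain of equivalences preceding the ILP, constraint~\eqref{ilp:ef} is exactly $\lfloor t_k\rho_{ik}\rfloor\le t_i$. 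Hence $u_i(\mathbf{Y}_k)\le u_i(\mathbf{Y}_i)$, and the welfare of $\mathbf{Y}$ is $\sum_i t_i=W_{ILP}$.

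\emph{Running time.} The ILP has $n$ integer variables and $m+n(n-1)$ constraints. Its coefficients are demands and supplies, so $\log V=\mathcal{O}(\log q_{\max})$. We may assume $d_{ij}\le q_j$, since otherwise every $t_i$ is forced to $0$ and this is easily handled. Computing all $\rho_{ik},g_{ik}$ takes $\mathcal{O}(n^2m)$ time. Proposition~\ref{lem:lenstra} with $p=n$, $d=0$ then gives $n^{\mathcal{O}(n)}\cdot\mathrm{poly}(m,n,\log q_{\max})$, and writing down $\mathbf{Y}$ costs $\mathcal{O}(nm)$ more.

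\emph{$W_{ILP}\ge W^\star$.} This is the main step. Take a welfare-maximizing complete envy-free allocation $\mathbf{X}$ with utilities $t_i=u_i(\mathbf{X}_i)$; we show $(t_i)$ is ILP-feasible. Capacity: $x_{ij}\ge t_id_{ij}$, and summing over agents gives~\eqref{ilp:cap}. Envy constraint: the bundle $t_k\mathbf{d}_k$ is componentwise dominated by $\mathbf{X}_k$, and Leontief utilities are monotone. Therefore $\lfloor t_k\rho_{ik}\rfloor = u_i(t_k\mathbf{d}_k)\le u_i(\mathbf{X}_k)\le u_i(\mathbf{X}_i)=t_i$, which is~\eqref{ilp:ef}. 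Hence $W^\star=\sum_i t_i\le W_{ILP}$.

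\emph{$W_{ILP}\le W^\star$.} Here we need completion. In the case $m>n^2+1$, Lemma~\ref{lem:completion} applied to $\mathbf{Y}$ yields a complete envy-free allocation with welfare $W_{ILP}$, so $W_{ILP}\le W^\star$. The only subtle point is this reverse direction, since $\mathbf{Y}$ is merely partial and a priori might not extend to a complete envy-free allocation; this is exactly where the hypothesis of the surrounding case, and hence Lemma~\ref{lem:completion}, is invoked.
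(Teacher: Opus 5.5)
Your proposal is correct and follows the paper's proof essentially step for step. Envy-freeness of $\mathbf{Y}$ comes from Lemma~\ref{lem:rho-identity} and constraint~\eqref{ilp:ef}, the running time from Proposition~\ref{lem:lenstra} with $p=n$, $W_{ILP}\ge W^\star$ from ILP-feasibility of the utility vector of an optimal complete envy-free allocation, and $W_{ILP}\le W^\star$ from Lemma~\ref{lem:completion} in the surrounding case $m>n^2+1$.

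Two of your steps are cleaner than the paper's. You read \eqref{ilp:ef} directly as $\lfloor t_k\rho_{ik}\rfloor\le t_i$, rather than passing through the intermediate inequality $t_k\rho_{ik}\le t_i$, which does not follow. You also make explicit the monotonicity bound $\lfloor t_k\rho_{ik}\rfloor=u_i(t_k\mathbf{d}_k)\le u_i(\mathbf{X}_k)$, which the paper uses only implicitly.

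The only slip is in your side remark: $d_{ij}>q_j$ forces only that agent's $t_i$ to $0$, not every $t_i$.
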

\begin{proof}

We first show that ${\mathbf Y}$ is envy-free. Since $(t_1,\ldots,t_n)$ is a feasible solution of the ILP, it satisfies %constraints~\eqref{} and
constraint ~\eqref{ilp:ef}. By construction, $y_{ij}=t_id_{ig_j}$ for every $g_j\in G$, so $u_i({\mathbf Y_i})=t_i$ for agent $i$. Due to Lemma~\ref{lem:rho-identity}, $u_i({\mathbf Y_k})=\lfloor t_k\rho_{ik}\rfloor$. Due to  constraint~\eqref{ilp:ef} $t_k\, d_{k,g_{ik}} \;\le\; (t_i+1)\, d_{i,g_{ik}} - 1$. $t_k\, d_{k,g_{ik}} \;<\; (t_i+1)\, d_{i,g_{ik}} $, $t_k\, d_{k,g_{ik}}/d_{i,g_{ik}} \;<\; (t_i+1)$ . $t_kd_{k,g_{ik}}/d_{i,g_{ik}} \;\le\; t_i$. Since $\rho_{ik} = \lfloor d_{kj}/d_{ij} \rfloor$, $t_k\rho_{ik}\le t_i$. Thus, $u_i({\mathbf Y_k})=\lfloor t_k\rho_{ik}\rfloor \leq t_k\rho_{ik}\le t_i$. Hence, the allocation is envy-free. 

We now show $W_{ILP}=W^\star$. First we show $W_{ILP}$ $\ge W^\star$. Let $\mathbf{X}^*$ be a welfare-maximising envy-free allocation and $t_i^*=u_i(\mathbf{X}_i^\star)$ for each $i\in N$. We show that $(t_1^\star,\ldots,t_n^\star)$ is a feasible solution to ILP. Since $t_i^*=\min_{g_j\in G}\lfloor x^\star_{ij}/d_{ij}\rfloor$, we have $x^*_{ij}\ge t_i^\star d_{ij}$ for every $g_j\in G$; fixing a good $g_j$ and summing over all $i$  gives $\sum_{i\in N}t_i^*d_{ij}\le \sum_{i\in N} x^\star_{ij} = q_j$, satisfying constraint~\ref{ilp:cap}.
% i.e.\ constraint~\eqref{ilp:cap}. 
For $i\ne k$, envy-freeness of $\mathbf{X}^\star$ gives $u_i(\mathbf{X}_k^\star)\le u_i(\mathbf{X}_i^*)=t_i^\star$. Using the inequalities above ILP formulation, we have $t_k^\star d_{k,g_{ik}}< (t_i^\star+1)d_{i,g_{ik}}$, satisfying constraint~\ref{ilp:ef}. Since $x_{ij}\ge 0$ and $d_{ij}$ is positive integer, $t_i^\star\geq 0$. Thus, $(t_1^\star,\ldots,t_n^\star)$ is a feasible allocation to ILP. Due to the optimality of $(t_1,\ldots,t_n)$, $\sum_{i\in N}t_i \geq \sum_{i\in N}t_i^\star$.  Hence,  $W_{ILP}$ $\ge W^*$.

Next, we argue $W_{ILP} \leq W^\star$. Since ${\mathbf Y}$ is an envy-free integral allocation, using Lemma~\ref{lem:completion}, we can obtain a complete envy-free integral allocation with welfare $W_{ILP}$. Since $W^\star$ is the maximum welfare, $W_{ILP} \leq W^\star$. Thus, $W_{ILP} = W^\star$. 

For running time we use the fact that the ILP~\eqref{ilp:obj}--\eqref{ilp:int} has $p=n$ integer variables, and constraint-matrix entries bounded by $V=q_{\max}$. Thus, using Proposition~\ref{lem:lenstra}, it can  be solved  in $p^{\mathcal{O}(p)}\cdot\mathrm{poly}(d,\log V)=n^{\mathcal{O}(n)}\cdot\mathrm{poly}(\log q_{\max})$ time.
\end{proof}

\paragraph{Case when $m \le n^2+1$.}
In this case, we formulate an MILP to solve the \LEFCASMshort problem directly. At its core, \LEFCASMshort asks for an allocation ${\bf X}$ satisfying
\begin{align}
  \max \quad & \textstyle\sum_{i \in N} u_i(\mathbf{X}_i), \label{eq:ilp1} \\
  \text{s.t.} \quad
  & \textstyle\sum_{i \in N} x_{ij} = q_j \quad \forall g_j \in G,  \label{eq:ilp2} \\
  & u_i(\mathbf{X}_i) \;\ge\; u_i(\mathbf{X}_k) \quad \forall i \ne k \in N, \label{eq:ilp3}
\end{align}
where $u_i(\mathbf{X}_i) = \min_{g_j}\lfloor x_{ij}/d_{ij}\rfloor$.
Here the objective and envy-freeness constraints involve
\emph{floors} and \emph{min}, which we can handle via standard techniques. We first note that $\min_{g_j\in G}\lfloor x_{ij}/d_{ij}\rfloor = \lfloor \min_{g_j\in G} x_{ij}/d_{ij}\rfloor$. Thus, we first select min using Big-M method, then floor can be ensured fractional variables. The total number of integer variables is $\mathcal{O}(n^2m)$. We give the full MILP description below.

\begin{align}
\max \quad
    & \sum_{i\in N} t_i'
    \label{lp:obj}\\[1mm]
\text{s.t.}\quad
     & \sum_{i\in N} x_{ij} = q_j
    && \forall\, g_j\in G
    \label{lp:cap}
    \\& t_i \le \frac{x_{ij}}{d_{ij}}
    && \forall\, i\in N,\; g_j\in G
    \label{lp:tub}\\
        & t_i \ge \frac{x_{ij}}{d_{ij}} - M(1-\alpha_{ij})
    && \forall\, i\in N,\; g_j\in G
    \label{lp:tlb}\\
    & \sum_{g_j\in G} \alpha_{ij} = 1
    && \forall\, i\in N
    \label{lp:tsel}\\
    & t_i-1 < t_i' \le t_i
    && \forall\, i\in N
    \label{lp:tfl}
    \\& s_{ik} \le \frac{x_{kj}}{d_{ij}}
    && \forall\, i\neq k\in N,\;
       g_j\in G
    \label{lp:sub}\\
    & s_{ik} \ge \frac{x_{kj}}{d_{ij}} - M(1-\beta^j_{ik})
    && \forall\, i\neq k\in N,\; g_j\in G
    \label{lp:slb}\\
    & \sum_{g_j\in G} \beta^j_{ik} = 1
    && \forall\, i\neq k\in N,\ 
    \label{lp:ssel}\\
    & s_{ik}-1 < s_{ik}' \le s_{ik}
    && \forall\, i\neq k\in N,\ ,
    \label{lp:sfl}
    \\& t_i' \ge s_{ik}'
    && \forall\, i\neq k\in N,\,
    \label{lp:ef}\\
    & x_{ij},\, t_i',\, s_{ik}' \in \mathbb{Z}_{\ge0}
    && \forall\, i\neq k\in N,\; g_j\in G
    \label{lp:int}\\
     & \alpha_{ij},\,\beta^j_{ik}\in\{0,1\}
    && \forall\, i\neq k\in N,\; g_j\in G
    \label{lp:binary}\\
    & t_i,\, s_{ik} \ge 0
    && \forall\, i\neq k\in N.
    \label{lp:nn}
\end{align}

Constraints \ref{lp:tub} and \ref{lp:tfl} encode the utility of an agent $i$ for its own bundle, which is represented by the variable $t_i'$. Similarly, constraints \ref{lp:sub} and \ref{lp:sfl} encodes the  utility of agent $i$ for the bundle allocated to agent $k$, represented by $s_{ik}'$. The Constraint \ref{lp:ef} ensures the envy-freeness. Note that the allocation is integral. We take $M$ larger than any possible value of $x_{ij}/d_{ij}$ i.e $\max_{i \in N,g_j \in G} q_j/d_{ij}$

We solve the MILP using result of Proposition~\ref{lem:lenstra}.

\begin{lemma}
\label{lem:lp-correct}
The MILP above solves \LEFCASMshort %, 
in 
$(n^2m)^{\mathcal{O}(n^2m)}\cdot\mathrm{poly}(n,m,\log q_{\max})$ time.
\end{lemma}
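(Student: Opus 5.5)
The proof has two parts. First, feasible MILP solutions correspond exactly to complete EF allocations, with objective value equal to welfare. Second, the running time follows from Proposition~\ref{lem:lenstra}.

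\emph{Soundness.} Take any feasible MILP solution and let $\mathbf{X}=(x_{ij})$.

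By \eqref{lp:cap} and \eqref{lp:int}, $\mathbf{X}$ is a complete integral allocation.

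Next, $t_i=\min_j x_{ij}/d_{ij}$:
\begin{itemize}
  \item \eqref{lp:tub} gives $t_i\le \min_j x_{ij}/d_{ij}$.
  \item \eqref{lp:tsel} selects exactly one $j^\ast$ with $\alpha_{ij^\ast}=1$.
  \item For that $j^\ast$, \eqref{lp:tlb} gives $t_i\ge x_{ij^\ast}/d_{ij^\ast}\ge\min_j x_{ij}/d_{ij}$.
  \item For other $j$, the constraint is vacuous because $M\ge q_j/d_{ij}\ge x_{ij}/d_{ij}$.
\end{itemize}
Then \eqref{lp:tfl} together with the integrality of $t_i'$ forces $t_i'=\lfloor t_i\rfloor$. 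Using the identity $\min_j\lfloor x_{ij}/d_{ij}\rfloor=\lfloor\min_j x_{ij}/d_{ij}\rfloor$ noted in the paper, this gives $t_i'=u_i(\mathbf{X}_i)$.

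The same argument applied to $s_{ik},\beta^j_{ik},s'_{ik}$ yields $s'_{ik}=u_i(\mathbf{X}_k)$. Constraint \eqref{lp:ef} is then exactly the EF condition, and the objective equals the welfare of $\mathbf{X}$.

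\emph{Completeness.} Given any complete EF allocation $\mathbf{X}$, set:
\begin{itemize}
  \item $t_i=\min_j x_{ij}/d_{ij}$ and $t_i'=\lfloor t_i\rfloor$;
  \item $\alpha_{ij}=1$ for one minimizing $j$, and $0$ otherwise;
  \item $s_{ik}$, $\beta^j_{ik}$, $s'_{ik}$ analogously.
\end{itemize}
All constraints hold, and the objective equals the welfare of $\mathbf{X}$. So the MILP optimum is exactly the maximum EF welfare, and an optimal solution yields the desired allocation.

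The one technical point is the strict inequalities in \eqref{lp:tfl} and \eqref{lp:sfl}, which a standard MILP cannot express. I would handle this by scaling to a common denominator. Multiply $t_i$ and $s_{ik}$ by $L=\mathrm{lcm}$ of the relevant demands, or more simply replace $t_i$ by $x_{ij}/d_{ij}$ comparisons written as $d_{ij}t_i'\le x_{ij}$ for all $j$ and $x_{ij^\ast}\le d_{ij^\ast}(t_i'+1)-1+M(1-\alpha_{ij})$. The strict inequality then becomes a non-strict one with integer data. I expect this reformulation, and checking that it preserves the equivalence above, to be the main obstacle.

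\emph{Running time.} The integer variables are:
\begin{itemize}
  \item $x_{ij}$: $nm$ of them;
  \item $t_i'$, $s'_{ik}$: $O(n^2)$ of them;
  \item $\alpha_{ij}$: $nm$ of them;
  \item $\beta^j_{ik}$: $n^2m$ of them.
\end{itemize}
The total is $p=O(n^2m)$. There are $O(n^2)$ continuous variables, and all coefficients are bounded by $M\cdot\max d_{ij}=\mathrm{poly}(q_{\max},d_{\max})$, so $\log V$ is polynomial in the input size.

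Proposition~\ref{lem:lenstra} therefore gives running time $(n^2m)^{O(n^2m)}\cdot\mathrm{poly}(n,m,\log q_{\max})$. Since in this case $m\le n^2+1$, this bound is a function of $n$ alone times a polynomial.
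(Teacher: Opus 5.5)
Your proposal is correct and follows essentially the same route as the paper. Soundness comes from the big-$M$ selectors together with integrality, which force $t_i'=u_i(\mathbf{X}_i)$ and $s'_{ik}=u_i(\mathbf{X}_k)$; completeness comes from instantiating the variables from an optimal EF allocation; and the running time is the Lenstra--Kannan bound with $p=\mathcal{O}(n^2m)$ integer variables. Your replacement of the strict inequalities in \eqref{lp:tfl} and \eqref{lp:sfl} by $d_{ij}t_i'\le x_{ij}$ and $x_{ij}\le d_{ij}(t_i'+1)-1+M(1-\alpha_{ij})$ for all $g_j$, with $M\ge q_{\max}$, is a valid repair of a point the paper's proof passes over silently. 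Your coefficient bound $V$ involves $d_{\max}$, so the stated $\mathrm{poly}(\log q_{\max})$ bound additionally needs each $d_{ij}$ capped at $q_j+1$, which changes no utility.
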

\begin{proof}

Let $(N,G,Q,D)$ be an instance of \LEFCASMshort.

In the forward direction, let $(x_{ij}, t_i, s_{ik}, t_i', s_{ik}',, \alpha_{ij}, \beta^j_{ik})$ be an welfare maximizing envy-free solution of MILP. % feasible ILP solution.
Let ${\bf X}=({\bf X_1},\ldots {\bf X_n})$, where ${\bf X}_i=(x_{i1},\ldots x_{im})$. We show that ${\bf X}$ is a solution to $(N,G,Q,D)$. Due to constraint ~\ref{lp:int}, $x_{ij}\in \mathbb{Z}_{\geq 0}$. Due to constraint ~\ref{lp:cap}, the allocation is complete. 
We first show that $t_i$ and $s_{ik}$ attain the correct utility. Constraint~\ref{lp:tub} enforces $t_i \le x_{ij}/d_{ij}$ for all $g_j\in G$, so $t_i \le \min_{g_j\in G}(x_{ij}/d_{ij})$. By constraint~\ref{lp:tsel}, exactly one good $g_j^\star\in G$ has $\alpha_{ij^*}=1$. For every $j$ with $\alpha_{ij}=0$, constraint~\ref{lp:tlb} reads $t_i \ge x_{ij}/d_{ij} - M$; taking $M$ larger than any possible value of $x_{ij}/d_{ij}$ makes the right-hand side negative, so this bound is automatically satisfied by $t_i\ge 0$ and imposes no real restriction. Only at $g_j=g_{j^\star}$, where $\alpha_{ij^*}=1$, does the term $M(1-\alpha_{ij^\star})$ vanish and constraint~\ref{lp:tlb} become the bound $t_i \ge x_{ij^\star}/d_{ij^\star}$. Combining the two bounds, $t_i = x_{ij^\star}/d_{ij^\star}$. Since $t_i \le x_{ij}/d_{ij}$ holds for every $g_j\in G$, including $g_{j^\star}$, we get $x_{ij^\star}/d_{ij^\star} \le x_{ij}/d_{ij}$ for all $g_j\in G$, so $g_{j^\star}$ is itself a minimum and $t_i = \min_{g_j\in G}(x_{ij}/d_{ij})$. By an identical argument, constraints~\ref{lp:sub}--\ref{lp:ssel} force $s_{ik} = \min_{g_j\in G}(x_{kj}/d_{ij})$, constraint~\ref{lp:ssel} selects a unique $g_{j^\star}\in G$ with $\beta^{j^\star}_{ik}=1$, and constraint~\ref{lp:slb} at $g_j=g_{j^\star}$ gives $s_{ik}\ge x_{kj^\star}/d_{ij^\star}$, which together with constraint~\ref{lp:sub} forces $s_{ik} = x_{kj^\star}/d_{ij^\star} = \min_{g_j\in G}(x_{kj}/d_{ij})$.
Constraint~\ref{lp:tfl} enforces $t_i-1 < t_i'\le t_i$, placing $t_i'$ in the interval $(t_i-1,\,t_i]$. Integrality constraint~\ref{lp:int} combined with this interval determines $t_i'=\lfloor t_i\rfloor=u_i(\mathbf{X}_i)$. Similarly, constraint~\ref{lp:sfl} yields $s_{ik}'=\lfloor s_{ik}\rfloor=u_i(\mathbf{X}_k)$. Now, we argue envy-freeness. Suppose that there exists an agent $i$ who envies agent $k$, i.e., $u_i({\bf X}_i)< u_i({\bf X}_k)$. Since $t_i'=u_i(\mathbf{X}_i)$ and $s_{ik}'=u_i(\mathbf{X}_k)$ as shown above, this gives $t_i' < s_{ik}'$, contradicting constraint~\ref{lp:ef}. For welfare maximization, since $t_i'=u_i(\mathbf{X}_i)$ and the objective directly maximizes $\sum_{i\in N}t_i'$, and since every feasible ${\bf X}$ is envy-free as just shown, the welfare maximizing envy-free solution of the MILP achieves $W^*=\sum_{i\in N}u_i(\mathbf{x}_i)$, the maximum total welfare over all complete envy-free allocations.
\medskip
For the reverse direction, we show that any welfare-maximizing EF allocation is achievable as a feasible solution of the MILP with objective value $W^\star$. Let $\mathbf{X}^\star=(\mathbf{X}_1^\star,\ldots,\mathbf{X}_n^\star)$ be any welfare-maximizing EF allocation. For each $i\in N$, fix any $g_j\in G$ attaining $\min_{g_j\in G}(x_{ij}^\star/d_{ij})$ and set $\alpha_{ij}=1$ for that $g_j$ and $\alpha_{ij}=0$ for all other $g_j\in G$; for each $i,k\in N$ with $i\ne k$, fix any $g_j\in G$ attaining $\min_{g_j\in G}(x_{kj}^\star/d_{ij})$ and set $\beta^j_{ik}=1$ for that $g_j$ and $\beta^j_{ik}=0$ for all other $g_j\in G$. Set $t_i=\min_{g_j\in G}(x_{ij}^\star/d_{ij})$, $s_{ik}=\min_{g_j\in G}(x_{kj}^\star/d_{ij})$, $t_i'=\lfloor t_i\rfloor=u_i(\mathbf{X}_i^\star)$, and $s_{ik}'=\lfloor s_{ik}\rfloor=u_i(\mathbf{X}_k^\star)$. We claim that the tuple $(x_{ij}^\star, t_i, s_{ik}, t_i', s_{ik}', \alpha_{ij}, \beta^j_{ik})$ is a feasible solution to the MILP with objective value $W^\star=\sum_i u_i(\mathbf{X}_i^\star)$. Constraint~\ref{lp:cap} holds since $\mathbf{X}^\star$ is a complete allocation; constraint~\ref{lp:tub} holds by definition of $t_i$ as a minimum; constraint~\ref{lp:tlb} holds with equality at the selected $g_j$ (where $\alpha_{ij}=1$, so the term $M(1-\alpha_{ij})$ vanishes) and holds automatically for all other $g_j$ (where $\alpha_{ij}=0$ makes the right-hand side $x_{ij}/d_{ij}-M$ negative, which is below the nonnegative $t_i$, since $M$ is chosen large enough); constraint~\ref{lp:tsel} holds by construction, as exactly one $\alpha_{ij}=1$ per agent $i$; constraint~\ref{lp:tfl} holds because $t_i'=\lfloor t_i\rfloor$ satisfies $t_i-1 < t_i' \le t_i$; constraints~\ref{lp:sub},~\ref{lp:slb}, ~\ref{lp:ssel} and ~\ref{lp:sfl} holds with similar argument. Constraint~\ref{lp:ef} holds since $t_i'=u_i(\mathbf{X}_i^*)\ge u_i(\mathbf{X}_k^\star)=s_{ik}'$ by envy-freeness of $\mathbf{X}^*$; constraints~\ref{lp:int},~\ref{lp:binary}, and~\ref{lp:nn} hold by construction. Since this feasible solution achieves objective value $\sum_i t_i'=W^\star$, the MILP optimum is at least $W^*$. Combined with the forward direction, the MILP optimum equals exactly $W^*$.
\medskip
Now we argue the running time. The integer/binary variables are $x_{ij}$ ($\mathcal{O}(nm)$), $t_i',s_{ik}'$ ($\mathcal{O}(n^2)$), and selectors $\alpha_{ij}$,$\beta^j_{ik}$. Hence $p=\mathcal{O}(nm+n^2+n^2 m)=\mathcal{O}(n^2m)$. The continuous variables are $t_i,s_{ik}$ ($\mathcal{O}(n^2)$), so $d=\mathcal{O}(n^2)$. By Proposition~\ref{lem:lenstra}, the MILP solves in $p^{\mathcal{O}(p)}\cdot\mathrm{poly}(d,\log V)=(n^2m)^{\mathcal{O}(n^2m)}\cdot\mathrm{poly}(n,m,\log q_{\max})$ time. 
\end{proof}
By using Lemma~\ref{lem:completion}, Lemma~\ref{lem:ilp-value}, and Lemma~\ref{lem:lp-correct} we get the following theorem.
\begin{theorem}
\label{thm:fpt-welfare-n}
\LEFCASMshort can be solved in time $2^{\mathcal{O}(n^4 \log n)} \cdot \mathrm{poly}(n, m, \log(q_{\max})) \cdot q_{\max}$. %, 
\end{theorem}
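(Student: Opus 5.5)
The plan is a case split on $m$ versus $n^2+1$, combining the lemmas already established, and then checking that both branches fit inside the claimed bound $2^{\mathcal{O}(n^4\log n)}\cdot\mathrm{poly}(n,m,\log q_{\max})\cdot q_{\max}$.

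\emph{Case $m\le n^2+1$.} Here we solve the MILP \eqref{lp:obj}--\eqref{lp:nn} directly. By Lemma~\ref{lem:lp-correct}, its optimum is exactly $W^\star$, and an optimal solution yields a welfare-maximising complete envy-free allocation. The running time is $(n^2m)^{\mathcal{O}(n^2m)}\cdot\mathrm{poly}(n,m,\log q_{\max})$. Substituting $m\le n^2+1$ gives $n^2m=\mathcal{O}(n^4)$, so the first factor is $(n^4)^{\mathcal{O}(n^4)}=2^{\mathcal{O}(n^4\log n)}$. This is within the claimed bound without using the $q_{\max}$ factor.

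\emph{Case $m>n^2+1$.} First compute $\rho_{ik}$ and $g_{ik}$ for all ordered pairs, in $\mathcal{O}(n^2m)$ time. Then solve the ILP \eqref{ilp:obj}--\eqref{ilp:int} with $n$ integer variables via Proposition~\ref{lem:lenstra}; this takes $n^{\mathcal{O}(n)}\cdot\mathrm{poly}(m,n,\log q_{\max})$ time. Form the partial allocation $y_{ij}=t_id_{ij}$. By Lemma~\ref{lem:ilp-value}, $\mathbf{Y}$ is envy-free and has welfare $W_{ILP}=W^\star$. Finally, apply Lemma~\ref{lem:completion} (its hypothesis $m>n^2+1$ holds) to extend $\mathbf{Y}$ to a complete envy-free allocation of welfare $W^\star$, which is optimal by the definition of $W^\star$.

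\emph{Main obstacle.} The step needing care is quantifying the ``pseudo-polynomial time'' of Lemma~\ref{lem:completion}, to confirm it contributes only $\mathrm{poly}(n,m)\cdot q_{\max}$ or at worst $\mathrm{poly}(n,m)\cdot q_{\max}$ up to polynomial factors.
\begin{itemize}
\item The lemma's proof assigns at least one copy per iteration, so there are at most $mq_{\max}$ iterations.
\item Within one iteration, Phases~1--3 take $\mathrm{poly}(n,m)$ time, given that utilities are computed with $\mathcal{O}(m)$ arithmetic operations on $\log q_{\max}$-bit numbers.
\item Phase~4 performs at most $n^2$ cycle eliminations, since each elimination strictly decreases the number of edges of $P_g$. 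Each elimination requires building $P_g$ and finding a shortest cycle, which is polynomial in $n$ and $m$.
\end{itemize}
Hence the completion costs $\mathrm{poly}(n,m,\log q_{\max})\cdot q_{\max}$. Adding the ILP cost, this case runs in $n^{\mathcal{O}(n)}\cdot\mathrm{poly}(n,m,\log q_{\max})\cdot q_{\max}$.

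Taking the maximum over the two cases gives the claimed bound. Correctness in each case comes from the equality $W_{ILP}=W^\star$ and from the exactness of the MILP.
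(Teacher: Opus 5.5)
Your proposal is correct and matches the paper's proof. It uses the same case split on $m$ versus $n^2+1$: the MILP of Lemma~\ref{lem:lp-correct} for small $m$, and for large $m$ the $n$-variable ILP of Lemma~\ref{lem:ilp-value} followed by the extension of Lemma~\ref{lem:completion}. Your explicit cost accounting for the extension step (at most $mq_{\max}$ iterations, each polynomial) is more careful than the paper's, which only claims $\mathrm{poly}(n,m,q_{\max})$, and it is what gives the single $q_{\max}$ factor in the statement; you also cover the boundary case $m=n^2+1$, which the paper's proof text skips.
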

\begin{proof}
We distinguish two cases based on $m$.When $m < n^2+1$. Since $m=\mathcal{O}(n^2)$, the MILP of Lemma~\ref{lem:lp-correct} has $p=\mathcal{O}(n^2m)=\mathcal{O}(n^4)$ integer/binary variables. By Lemma~\ref{lem:lp-correct}, this MILP can be solved, in $(n^4)^{\mathcal{O}(n^4)}\cdot\mathrm{poly}(n,\log q_{\max}) = 2^{\mathcal{O}(n^4 \log n)} \cdot \mathrm{poly}(n,\log q_{\max})$ time, which is FPT in $n$. When $m > n^2+1$. By Lemma~\ref{lem:ilp-value}, the ILP~\ref{ilp:obj}--ref{ilp:int}, which has only $n$ integer variables, computes the welfare maximizing envy-free utility vector $\mathbf{t}^*=(t_1^*,\ldots,t_n^*)$ in $n^{\mathcal{O}(n)}\cdot\mathrm{poly}(\log q_{\max})$ time, and its optimal value equals $W^*$, the maximum welfare of a complete envy-free allocation. Forming the partial allocation $Y_{ij}=t_i^*d_{ij}$ for $g_j\in G$ yields an incomplete envy-free allocation with welfare $W^*$; by Lemma~\ref{lem:completion}, since $m>n^2+1$, this can be completed into a complete envy-free allocation with the same welfare $W^*$ in pseudo-polynomial time (polynomial in $n$, $m$, and $q_{\max}$). Combining both steps gives a total running time of $n^{\mathcal{O}(n)}\cdot\mathrm{poly}(n,m,q_{\max})$, which is pseudo-polynomial FPT in $n$.
\end{proof}

\subsection{An FPT Algorithm parameterized by $m+q_{\max}$}

We now present a dynamic programming (DP) algorithm for \LEFCASMshort that is FPT when parameterised by the number of goods $m$ and the maximum number of copies of a good $q_{\max}=\max_{g_j\in G}q_j$. Let $\tau \le n$ be the number of distinct demand types. Type $\ell \in [\tau]$ has demand vector
$\mathbf{d}^{(\ell)} = (d^{(\ell)}_1, \ldots, d^{(\ell)}_m) \in \mathbb{Z}_{\ge 0}^m$, and
a count of $n_\ell > 0$ agents. Note that all agents of type $\ell$ have identical preferences, which implies that in any envy-free allocation they all must have the same utility. The idea is to enumerate all \emph{potential utility vectors} corresponding to envy-free allocations, and for each such potential utility vector, try to extend it to a complete envy-free allocation. Here two remarks are in order. First, instead of enumerating all potential utility vectors, we could directly use the ILP of Lemma~\ref{lem:ilp-value} to obtain an optimal utility vector; however this step would require time FPT in $n$ whereas here we only seek to parameterize by $m$ and $q_{\max}$. Second, for each potential envy-vector, we cannot directly rely on the pseudo-polynomial extension algorithm of Lemma~\ref{lem:completion} since it requires $m > n^2+1$. Therefore, we design a dynamic programming algorithm that plays a similar role as the extension algorithm. Call $\mathbf{t} = (t_1, \ldots, t_\tau)$
the \emph{potential utility vector}, with feasible range 
$t_\ell \in \{0, \ldots, U_\ell\}$ for each type $\ell \in [\tau]$, where
$U_\ell = \min_{g_j \in G} \lfloor q_j / d^{(\ell)}_j \rfloor$. 

\paragraph{Algorithm.}
We enumerate all potential utility vectors $\mathbf{t} \in \bigtimes_{\ell=1}^\tau \{0,\ldots,U_\ell\}$. For each such $\mathbf{t}$ run the feasibility DP below that extends to a complete EF allocation wherein every agent has utility as per $\mathbf{t}$ iff it is possible. We say that $\mathbf{t}$ is \emph{EF-extendible} in this case.
Among all such $\mathbf{t}$'s that are \emph{EF-extendible}, return a complete EF allocation corresponding with maximum welfare, i.e., $\sum_\ell n_\ell t_\ell$. If no $\mathbf{t}$ is EF-extendible, then we return that no EF allocation exists (note that this cannot happen for $m > 1$ due to \Cref{thm:multi-good-ef}).

\paragraph{DP for a fixed $\mathbf{t}$.}
Arbitrarily order agents as $1, \ldots, n$ and let $\ell(i)$ denote the type of agent~$i$. An integer vector $\mathbf{q}' = (q'_1, q'_2, \ldots, q'_m)$ where $0 \le q'_j \le q_j$ is called a \emph{supply vector}, and intuitively the number $q'_j$ will indicate the number of leftover copies of good $g_j$ after some partial allocation.

For any $1 \le i \le n$, and a supply vector $\mathbf{q}' = (q'_1, q'_2, \ldots, q'_n)$, we define a table entry $T[i,\, q_1', \ldots, q_m'] \;\in\; \{\mathtt{true}, \mathtt{false}\}$,
which indicates whether there exists a partial allocation that assigns bundles $\mathbf{Z}_1, \mathbf{Z}_2, \dots, \mathbf{Z}_i$ to agents $1$ through $i$ that satisfies the following properties.
\begin{enumerate}[label=(\roman*)]
  \item for each $1 \le j \le m$, the number of remaining copies of good $g_j$ after this allocation is exactly $q'_j$, 
  \item each agent $1 \le i' \le i$ achieves utility exactly $t_{\ell(i')}$, i.e., $u_{i'}(\mathbf{Z}_{i'}) = t_{\ell(i')}$, 
  and
  \item for any agent $1 \le i' \le i$, and any type $\tilde{\ell} \neq \ell(i)$, the utility of any agent of type $\tilde{\ell}$ for $\mathbf{Z}_{i'} = (z'_{i'1}, \dots, z'_{i'm})$ is at most $t_{\tilde{\ell}}$, i.e., $\min_{g_j} \lfloor z'_{i'j}/d^{\tilde{\ell}}_j \rfloor \le t_{\tilde{\ell}}$.
\end{enumerate}

\noindent\textbf{Base case.}\quad
Let $T[0,\, q_1, \ldots, q_m] = \mathtt{true}$; for every other $(q'_1, q'_2,, \ldots, q'_m) \neq (q_1, q_2, \ldots, q_m)$, we define $T[0, q'_1, q'_2, \ldots, q'_m] = \mathtt{false}$.

\noindent\textbf{Recurrence.}\quad Let us now discuss how to fill an entry $T[i, q'_1, q'_2, \ldots, q'_m]$, assuming that all entries $T[i-1, \cdot, \dots, \cdot]$ are already filled. Let $\mathbf{q}' = (q'_1, q'_2, \ldots, q'_m)$, and $\ell = \ell(i)$ be the type of agent $i$.

Call a bundle $\mathbf{Z} = (z_{i},\ldots,z_{m})$ \emph{valid} for agent~$i$ under
$\mathbf{t}$ and remaining supply $\mathbf{q}'$ if it satisfies the following conditions.
\begin{enumerate}[label=(\arabic*)]
  \item \emph{Supply:} $0 \le z_j \le q_j'$ for all $g_j \in G$.
  \item \emph{Own utility:} $u_\ell(\mathbf{Z}) = t_\ell$, i.e.\
        $z_j \ge t_\ell\,d^{(\ell)}_j$ for all $g_j \in S_\ell$, and
        $z_{j^*} < (t_\ell+1)\,d^{(\ell)}_{j^*}$ for some $g_{j^*} \in S_\ell$.
  \item \emph{No cross-type envy:} for every type $\ell' \ne \ell$,
        $u_{\ell'}(\mathbf{Z}) \le t_{\ell'}$, i.e.\ there exists
        $g_j \in S_{\ell'}$ with $z_j < (t_{\ell'}+1)\,d^{(\ell')}_j$.
\end{enumerate}
Then
\[
\begin{aligned}
T[i, q_1', \ldots, q_m']
& \\
& = \bigvee_{\mathbf{Z}\in\mathcal Z_i}
T\!\left[
i-1,\,
q_1'+z_1,\,
\ldots,\,
q_m'+z_m
\right],
\end{aligned}
\]

where $\mathcal Z_i$ is the set of valid allocations for agent~$i$.

\begin{lemma}
\label{lem:dp-fixed-target}
For a fixed potential utility vector $\mathbf{t}=(t_1,\ldots,t_\tau)$, the DP decides in
$\mathcal{O}(n\cdot Q^{3m}\cdot m)$ time whether a complete allocation of
$a_1,\ldots,a_n$ satisfying conditions~(i)--~(iii) with remaining supply
$(0,\ldots,0)$ exists, and if so, produces one.
\end{lemma}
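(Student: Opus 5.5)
The plan is to prove correctness of the DP by induction on $i$, and then count the work done to fill the table. $Q$ is read as $q_{\max}+1$, an upper bound on the number of values each coordinate $q'_j$ or $z_j$ can take.

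\textbf{Correctness invariant.} The invariant is: for every $0\le i\le n$ and every supply vector $\mathbf{q}'$, $T[i,\mathbf{q}']=\mathtt{true}$ if and only if there are bundles $\mathbf{Z}_1,\ldots,\mathbf{Z}_i$ satisfying (i)--(iii).

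\emph{Base case $i=0$.} The empty allocation leaves exactly $\mathbf{q}$, and conditions (ii) and (iii) are vacuous.

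\emph{Inductive step, forward direction.} Suppose $T[i,\mathbf{q}']$ is set to true through some valid $\mathbf{Z}$ with $T[i-1,\mathbf{q}'+\mathbf{Z}]=\mathtt{true}$. By induction there are bundles $\mathbf{Z}_1,\ldots,\mathbf{Z}_{i-1}$ leaving supply $\mathbf{q}'+\mathbf{Z}$. Appending $\mathbf{Z}_i=\mathbf{Z}$ leaves exactly $\mathbf{q}'$, so (i) holds. Condition (2) of validity gives (ii) for agent $i$, and condition (3) gives (iii) for agent $i$. Condition (1) of validity guarantees that $\mathbf{Z}$ does not exceed the remaining supply.

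\emph{Inductive step, reverse direction.} Given a witness $\mathbf{Z}_1,\ldots,\mathbf{Z}_i$, the bundle $\mathbf{Z}_i$ is valid with respect to $\mathbf{q}'$. Its coordinates fit in the supply, and (ii) and (iii) for agent $i$ are precisely conditions (2) and (3). The prefix $\mathbf{Z}_1,\ldots,\mathbf{Z}_{i-1}$ witnesses $T[i-1,\mathbf{q}'+\mathbf{Z}_i]$. Since $\mathbf{Z}_i$ appears in the disjunction, the entry is true.

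\emph{Conclusion and envy-freeness.} Taking $i=n$ and $\mathbf{q}'=\mathbf{0}$ gives the claimed decision. One should also note that a witness at $i=n$ with zero leftover supply is a complete allocation. It is envy-free: an agent of type $\tilde\ell$ gets exactly $t_{\tilde\ell}$ and values any bundle of another type at most $t_{\tilde\ell}$ by (iii). For a bundle of the same type, its utility to this agent is exactly $t_{\tilde\ell}$ by (ii), since the demands are identical. This remark is not required by the lemma itself, but it is what the surrounding algorithm uses.

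The condition for ``own utility'' must be read with $S_\ell=G$, since all demands are positive. I would point out the small slip that (iii) should quantify over $\tilde\ell\neq\ell(i')$, and prove the lemma under that reading.

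\textbf{Constructing an allocation.} To output an allocation rather than only decide, store with each true entry a back-pointer to a bundle $\mathbf{Z}$ that made it true. Then trace back from $T[n,\mathbf{0}]$ in $\mathcal{O}(nm)$ time.

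\textbf{Running time.} There are $(n+1)\cdot Q^m$ table entries. For each entry we enumerate at most $Q^m$ candidate bundles $\mathbf{Z}$, with $z_j\le q'_j\le q_{\max}$.

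\emph{Validity check.} Conditions (1) and (2) take $\mathcal{O}(m)$ time. Condition (3) naively takes $\mathcal{O}(\tau m)$ time, which is too slow for the claimed bound. This is the main obstacle: matching the stated $\mathcal{O}(n\cdot Q^{3m}\cdot m)$ without a $\tau$ factor.

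\emph{Removing the $\tau$ factor.} Precompute, once per fixed $\mathbf{t}$, a boolean array over all $Q^m$ bundles $\mathbf{Z}$. The entry records whether $u_{\ell'}(\mathbf{Z})\le t_{\ell'}$ for all types $\ell'$ other than a given $\ell$. Store the count of violated types for each $\mathbf{Z}$, together with whether the type $\ell$ itself is among the violators. Computing this costs $Q^m\cdot\tau m\le Q^m\cdot nm$, which is within $\mathcal{O}(n\cdot Q^{3m}\cdot m)$.

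\emph{Total.} After precomputation each validity check and table lookup is $\mathcal{O}(m)$, including computing the index $\mathbf{q}'+\mathbf{Z}$. The total is $\mathcal{O}(n\cdot Q^{m}\cdot Q^{m}\cdot m)$, which is bounded by the stated $\mathcal{O}(n\cdot Q^{3m}\cdot m)$. The slack to the third power also absorbs any cruder per-bundle bookkeeping, such as a naive scan over up to $Q^m$ precomputed entries, if one prefers not to index carefully.
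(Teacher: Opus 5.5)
Your proposal is correct. The correctness part is the same induction the paper uses: the same invariant, the same forward and reverse steps, instantiation at $T[n,\mathbf{0}]$, and backtracking via stored pointers.

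Your per-bundle reading of (iii), with $\tilde\ell\neq\ell(i')$, is the one under which the step closes cleanly. Under it, validity condition~(3) is literally (iii) for $i'=i$. The paper's reverse direction instead derives (3) from ``no earlier agent envies $i$'', which only covers types present among agents $1,\dots,i-1$.

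The real difference is the running-time accounting. The paper multiplies $n\cdot Q^m$ entries, $Q^m$ bundles per entry, and an $\mathcal{O}(\tau m)\le\mathcal{O}(nm)$ check, and then states $\mathcal{O}(n\cdot Q^{3m}\cdot m)$. That product is really $\mathcal{O}(n\tau\cdot Q^{2m}\cdot m)$. So the stated bound tacitly needs $\tau\le Q^m$. This is justified only in the proof of Theorem~\ref{thm:dp-welfare}, by collapsing all types with some $d^{(\ell)}_j>q_j$ into a single always-zero type.

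Your precomputation removes the dependence on $\tau$ and gives a sharper, self-contained bound of $\mathcal{O}(n\cdot Q^{2m}\cdot m)$. It stores the violation count $c(\mathbf{Z})$ for each of the at most $Q^m$ bundles, at cost $\mathcal{O}(Q^m\tau m)$. You can also simplify it: once (2) holds, type $\ell$ cannot be a violator, so (3) reduces to $c(\mathbf{Z})=0$.

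Conversely, your remark that the naive check is too slow holds only if effectively identical types are not merged. After merging, the paper's naive count already fits the stated bound.

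One notational caution concerns supply condition~(1). It must be read against the supply available before agent $i$ is served, i.e.\ $\mathbf{q}'+\mathbf{Z}\le\mathbf{q}$. Read literally as $z_j\le q'_j$ with $\mathbf{q}'$ the leftover supply, each agent could take at most half of what remains. Leftover supply would then stay positive, and $T[n,\mathbf{0}]$ would never be true.

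The paper's proof (``$0\le z_j\le q'_j+z_j$'') and your correctness argument both implicitly use the right reading. Your running-time sentence, however, writes the literal one. This does not affect the $Q^m$ bound on the number of candidate bundles.
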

\begin{proof}
    
We prove by induction on $i$ that $T[i,\,q_1',\ldots,q_m']=\mathtt{true}$ if and only if agents~$1$ to $i$ can be allocated bundles satisfying conditions~(i)--(iii) with remaining supply $(q_1',\ldots,q_m')$.

\emph{Base case ($i=0$).}
In this case, conditions~(i)--(iii) hold vacuously. Since no good has been allocated to any agent, the only valid remaining supply vector is $(q_1,\ldots,q_m)$, and only the entry corresponding to this vector is set to \texttt{true}.

\emph{Induction hypothesis (IH).}
Assume that for all $(q_1',\ldots,q_m')$, $T[i-1,\,q_1',\ldots,q_m']=\mathtt{true}$ iff agents~$1,\ldots,i-1$ can be allocated bundles satisfying conditions~(i)--(iii) with remaining supply vector $(q_1',\ldots,q_m')$.

\emph{Induction step.} We show the same holds for $i$.

In the forward direction, suppose for some $\mathbf{q}' = (q'_1, q'_2, \dots, q'_m)$,  $T[i, q'_1, q'_2, \dots, q'_m]=\mathtt{true}$. Then, by the recurrence, there exists a valid bundle $\mathbf{Z}=(z_{1},\ldots,z_{m})$ for agent~$i$ such that $T[i-1,\,q_1'+z_1,\ldots,q_m'+z_m]=\mathtt{true}$. By IH, there exist bundles $\mathbf{Z}_{1},\ldots,\mathbf{Z}_{i-1}$ that can be assigned to agents $1$ through $i-1$, satisfying conditions~(i)--(iii), such that the remaining supply is $(q_1'+z_1,\ldots,q_m'+z_m)$. Set $\mathbf{Z}_{i}=\mathbf{Z}$. Assigning $\mathbf{Z}_{i}$ consumes $(z_1,\ldots,z_m)$ from the remaining supply, leaving the remaining supply vector $(q_1',\ldots,q_m')$, so condition~(i) holds. 
\\Now we show Condition~(ii). Clearly, the utility of all agents $i' < i$ is correct by IH. Moreover, since $\mathbf{Z}$ is valid for $i$, the \emph{valid bundle} condition (2) implies that $u_{\ell(i)}(\mathbf{Z})=t_{\ell(i)}$. 
\\For condition~(iii), by IH, agents~$1,\ldots,i-1$ are mutually envy-free when assigned bundles $\mathbf{Z}_1, \dots, \mathbf{Z}_{i-1}$ respectively. We verify that agent~$i$ and earlier agents are envy-free. For any $i'<i$, let $\ell'=\ell(i')$. If $\ell'\ne\ell(i)$, then \emph{valid bundle} condition (3) gives $u_{\ell'}(\mathbf{Z})\le t_{\ell'}=u_{\ell'}(\mathbf{Z}_{i'})$, so agent~$i'$ does not envy agent~$i$. If $\ell'=\ell(i)$, then {\emph{valid bundle}} condition (2) gives $u_{\ell'}(\mathbf{Z})=t_{\ell'}=u_{\ell'}(\mathbf{Z}_{i'})$, so again agent~$i'$ does not envy agent~$i$. Now, we show that $i$ does not envy any $i' < i$. If $\ell(i) = \ell(i')$, then by {\emph{valid bundle}} condition (2) their utilities are equal. If $\ell(i') \neq \ell(i)$, then condition (iii) of IH implies that $u_i(\mathbf{Z}_{i'}) \le t_i$, implying $i$ does not envy $i'$.

In the reverse direction suppose there exist bundles $\mathbf{Z}_1,\ldots,\mathbf{Z}_i$ that (A) satisfy conditions~(i)--(iii), and (B) the remaining supply vector is $(q_1',\ldots,q_m')$. Let $\mathbf{Z}=\mathbf{Z}_i=(z_1,\ldots,z_m)$ be the bundle assigned to agent~$i$, which implies that the remaining supply vector after assigning $\mathbf{Z}_1, \dots, \mathbf{Z}_{i-1}$ to agents $1$ through $i-1$ was $(q_1'+z_1,\ldots,q_m'+z_m)$. It follows that the bundles $\mathbf{Z}_1,\ldots,\mathbf{Z}_{i-1}$ satisfy conditions~(i)--(iii) with the aforementioned remaining supply vector. Then, by IH, $T[i-1,\,q_1'+z_1,\ldots,q_m'+z_m]=\mathtt{true}$. We now verify that $\mathbf{Z}$ is valid for agent~$i$. Condition \emph{valid bundle} condition (1) holds because condition~(i) gives $0\le z_j\le q_j'+z_j$ for all $g_j\in G$. \emph{valid bundle} condition~(2) holds because condition~(ii) for agent~$i$ requires $u_{\ell(i)}(\mathbf{Z})=t_{\ell(i)}$, which is exactly what \emph{valid bundle} condition (2) states. \emph{valid bundle} condition~(3) holds because condition~(iii) requires that no agent~$i'$ where $(i'<i)$ envies agent~$i$. Since agents have types $\ell(i')$, this means for every type $\ell'\ne\ell(i)$, we have $u_{\ell'}(\mathbf{Z})\le t_{\ell'}$, which is precisely the requirement of \emph{valid bundle} condition (3).Since $\mathbf{Z}$ is valid for agent~$i$ and $T[i-1,\,q_1'+z_1,\ldots,q_m'+z_m]=\mathtt{true}$, the recurrence  gives $T[i,\,q_1',\ldots,q_m']=\mathtt{true}$.

\emph{Conclusion.}
By instantiation at $i=n$ and $(q_1',\ldots,q_m')=(0,\ldots,0)$: $T[n,0,\ldots,0]=\mathtt{true}$ iff a complete feasible allocation achieving target $\mathbf{t}=(t_1,\ldots,t_\tau)$ exists, which the DP correctly determines. In this case, we can also recover a witnessing EF allocation by standard backtracking techniques, which takes time proportional to the size of the table.

Next we argue the running time. Let $Q=q_{\max}+1$. The DP table has $n\cdot Q^m$ entries (one for each $i\in[n]$ and each $(q_1',\ldots,q_m')\in[0,q_{\max}]^m$). For each entry, we iterate over at most $Q^m$ valid bundles $\mathbf{z}$, checking each in $\mathcal{O}(\tau m)\le \mathcal{O}(nm)$ time. Total: $\mathcal{O}(n\cdot Q^{3m}\cdot m)$.
\end{proof}

\begin{theorem}
\label{thm:dp-welfare} 
\LEFCASMshort can be solved in time $ \mathcal{O}\!\left(Q^{Q^m+3m}\cdot nm\right)$ where $Q=q_{\max}+1$, i.e., FPT, parameterised by  $m+q_{\max}$.
\end{theorem}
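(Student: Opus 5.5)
The plan is to combine the enumeration over potential utility vectors with the DP of Lemma~\ref{lem:dp-fixed-target}, and then bound the number of vectors purely in terms of $m$ and $q_{\max}$.

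\emph{Correctness.} Let $\mathbf{X}^\star$ be a welfare-maximising complete EF allocation.
\begin{enumerate}[label=(\alph*)]
\item Agents of the same type have identical preferences, so in any EF allocation they all receive the same utility. Hence $\mathbf{X}^\star$ induces a vector $\mathbf{t}^\star$ indexed by types.
\item Since $x^\star_{ij}\le q_j$, we have $t^\star_\ell\le U_\ell$, so $\mathbf{t}^\star$ appears in the enumeration.
\item $\mathbf{X}^\star$ satisfies conditions (i)--(iii) with final supply $(0,\ldots,0)$. Condition (iii) follows from envy-freeness, since each agent's utility for another agent's bundle is at most its own utility $t_{\tilde\ell}$. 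So by Lemma~\ref{lem:dp-fixed-target}, $T[n,0,\ldots,0]=\mathtt{true}$ for $\mathbf{t}^\star$, and $\mathbf{t}^\star$ is EF-extendible.
\item Conversely, for any EF-extendible $\mathbf{t}$ the DP returns a complete allocation in which every agent has utility $t_{\ell(i)}$. The envy-freeness argument in the proof of Lemma~\ref{lem:dp-fixed-target} shows this allocation is EF, and its welfare is $\sum_\ell n_\ell t_\ell$.
\end{enumerate}
Taking the maximum over all EF-extendible $\mathbf{t}$ therefore returns an optimal solution.

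\emph{Bounding the number of types.} This is the main obstacle, because $\tau$ can a priori be as large as $n$. The key observation is that only the truncated demand vector matters. If $d^{(\ell)}_j>q_j$ for some $j$, then every agent of type $\ell$ has utility $0$ for every bundle. Such agents never envy anyone, and they can be given $t_\ell=0$ without enumerating them. Two demand vectors $\mathbf{d},\mathbf{d}'$ with $\lfloor x/d_j\rfloor=\lfloor x/d'_j\rfloor$ for all $0\le x\le q_j$ induce identical utility functions on every feasible bundle, so they can be merged into one type. Among types with positive possible utility, each $d^{(\ell)}_j\in\{1,\ldots,q_{\max}\}$, giving at most $q_{\max}^m\le Q^m$ effective types; all zero-utility types are merged into one class. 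I would preprocess by this merging in $\mathcal{O}(nm)$ time and check that the DP conditions, which only evaluate utilities on bundles with entries at most $q_{\max}$, are unaffected.

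\emph{Running time.} Each $U_\ell\le q_{\max}$, so there are at most $(q_{\max}+1)^{\tau}\le Q^{Q^m}$ potential utility vectors. For each one, Lemma~\ref{lem:dp-fixed-target} costs $\mathcal{O}(n\,Q^{3m}\,m)$. Multiplying gives $\mathcal{O}(Q^{Q^m+3m}\cdot nm)$, as claimed.
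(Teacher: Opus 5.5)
Your proposal is correct and follows essentially the paper's route: enumerate the at most $Q^{Q^m}$ type-indexed utility vectors, run the DP of Lemma~\ref{lem:dp-fixed-target} for each, and bound $\tau\le Q^m$ by observing that demands exceeding supply are indistinguishable. Your explicit correctness argument (a)--(d) and your explicit merging of types with identical truncated demand vectors supply details that the paper's proof only gestures at with its remark about ``effective demand types''.
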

\begin{proof}
For any good $g_j$ and agent type $\ell$ with demand value $d_j^{(\ell)}>q_j$, yields
utility $0$, regardless of the exact value of $d_j^{(\ell)}$. So only $Q$ effective demand values per coordinate matter, giving $\tau\le Q^m$ distinct effective demand types.
The number of potential utility vectors $\mathbf{t} = (t_1,\ldots,t_\tau)$ with $t_\ell\in\{0,\ldots,U_\ell\}$
is $\prod_\ell(U_\ell+1)\le Q^\tau\le Q^{Q^m}$.
By Lemma~\ref{lem:dp-fixed-target}, for each such such $\mathbf{t}$, we can either obtain a corresponding complete EF allocation, or obtain that no such allocation exist. For each $\mathbf{t}$, the DP takes
$\mathcal{O}(n\cdot Q^{3m}\cdot m)$ time.
Among all feasible $\mathbf{t}$, we return the one maximising $\sum_\ell n_\ell t_\ell$. The total work is $Q^{Q^m} \cdot \mathcal{O}(n\cdot Q^{3m}\cdot m)$, giving running time of $O\!\left(Q^{Q^m+3m}\cdot nm\right)$. 
\end{proof}

\section{Outlook}
In this paper, we initiate the systematic study of fair allocation of indivisible goods under Leontief preferences. This work opens several promising directions for future research: (i) studying the complexity and existence of other fairness notions, such as equitability; (ii) investigating whether envy-freeness and equitability can be achieved simultaneously; and (iii) determining whether envy-free allocations always exist when some agents may have zero demand for some goods. For the last question, it is easy to show that an envy-free allocation exists when every agent positively demands at least two goods, by constructing an allocation with zero welfare. In fact, our algorithms for single good, identical agents, FPT wrt $m+q_{\max}$ also work for this more general case.

\bibliographystyle{plainnat}
\bibliography{references}

\end{document}